\setlist[enumerate]{font=\normalfont,labelindent=*,leftmargin=*}
\setlist[itemize]{labelindent=*,leftmargin=*}
\setlist[description]{labelindent=*,leftmargin=*,itemindent=-1 em}
\edef\catc@de{\the\count255}
\newif\ifsuPer \suPertrue
\def\titlecomment#1{\def\@titlecomment{#1}}
\let\@titlecomment=\@empty
\renewcommand*\subjclass[2][2012]{%
  \def\@subjclass{#2}%
  \@ifundefined{subjclassname@#1}{%
    \ClassWarning{\@classname}{Unknown edition (#1) of ACM
      Subject Classification; using '2012'.}%
  }{%
    \@xp\let\@xp\subjclassname\csname subjclassname@2012\endcsname
  }%
}
\def\ACMCCS#1{\def\@ACMCCS{#1}}
\let\@ACMCCS=\@empty
\let\@xp\subjclassname\csname subjclassname@2012\endcsname
\def\amsclass#1{\def\@amsclass{#1}}
\let\@amsclass=\@empty
\def\amsclassname{2010 Mathematics Subject Classification}
\renewenvironment{abstract}{%
  \ifx\maketitle\relax
    \ClassWarning{\@classname}{Abstract should precede
      \protect\maketitle\space in AMS documentclasses; reported}%
  \fi
  \global\setbox\abstractbox=\vtop \bgroup
    \normalfont\Small
    \list{}{\labelwidth\z@
      \leftmargin3pc \rightmargin\leftmargin
      \listparindent\normalparindent \itemindent\z@
      \parsep\z@ \@plus\p@
      
    }%
    \item[\hskip\labelsep\scshape\abstractname.]%
}{%
  \endlist\egroup
  \ifx\@setabstract\relax \@setabstracta \fi
}
\newcommand{\revisionname}{Revision Note}
\newbox\revisionbox
\def\@setrevision{\@setrevisiona \global\let\@setrevision\relax}
\def\@setrevisiona{%
  \ifvoid\revisionbox
  \else
    \skip@20\p@ \advance\skip@-\lastskip
    \advance\skip@-\baselineskip \vskip\skip@
    \box\revisionbox
    \prevdepth\z@ 
    \bigskip\hrule\medskip
  \fi
}
\def\@setACMCCS{%
  {\itshape\subjclassname:}\enspace\@ACMCCS\@addpunct.}
\def\@setamsclass{%
  {\itshape\amsclassname:}\enspace\@amsclass\@addpunct.}
\def\@setkeywords{%
  {\itshape \keywordsname:}\enspace\@keywords\@addpunct.}
\def\@settitlecomment{\@titlecomment\@addpunct.}
\def\@setaddresses{\par
  \nobreak \begingroup
\footnotesize
  \def\author##1{\nobreak\addvspace\bigskipamount}%
  \def\\{\unskip, \ignorespaces}%
  \interlinepenalty\@M
  \def\address##1##2{\begingroup
    \par\addvspace\bigskipamount\noindent\narrower
    \@ifnotempty{##1}{(\ignorespaces##1\unskip) }%
    {\ignorespaces##2}\par\endgroup}%
  \def\curraddr##1##2{\begingroup
    \@ifnotempty{##2}{\nobreak\indent{\itshape Current address}%
      \@ifnotempty{##1}{, \ignorespaces##1\unskip}\/:\space
      ##2\par}\endgroup}%
  \def\email##1##2{\begingroup
    \@ifnotempty{##2}{\nobreak\indent{\itshape e-mail address}%
      \@ifnotempty{##1}{, \ignorespaces##1\unskip}\/:\space
      {##2}\par}\endgroup}%
  \def\urladdr##1##2{\begingroup
    \@ifnotempty{##2}{\nobreak\indent{\itshape URL}%
      \@ifnotempty{##1}{, \ignorespaces##1\unskip}\/:\space
      \ttfamily##2\par}\endgroup}%
  \addresses
  \endgroup
}
\def\(#1{\relax\hbox to1.4 em{\rm(\hss#1\hss)}}
\newdimen\cCHardim
\DeclareMathOperator*{\argmin}{arg\,min}
\newcommand{\rr}{\mathbb{R}}
\newcommand{\zz}{\mathbb{Z}}
\newcommand{\nn}{\mathbb{N}}
\newcommand{\mech}{\mathcal{M}}
\newcommand{\dat}{\mathcal{X}}
\newcommand{\stack}{\textup{stack}}
\newcommand{\diag}{\textup{diag}}
\newcommand{\geoprop}{\beta}
\newcommand{\geopropStrat}{\gamma}
\newcommand{\queryprop}{\alpha}
\newcommand{\nc}{c}
\newcommand\numberthis{\addtocounter{equation}{1}\tag{\theequation}}
\newtheorem{theorem}{Theorem}
\newtheorem{observation}{Observation}
\newtheorem{lemma}{Lemma}
\newtheorem{example}{Example}
\newtheorem{definition}{Definition}
\newcommand{\upstairs}[1]{\textsuperscript{#1}}
\newcommand{\affilone}{\dag}
\newcommand{\affiltwo}{\ddag}
\newcommand{\affilthree}{$\diamond$}
\newcommand{\affilfour}{$\bigtriangleup$}
\newcommand{\affilfive}{$\nabla$}
\newcommand\emails[1]{%
  \begingroup
  \renewcommand\thefootnote{}\footnote{#1}%
  \addtocounter{footnote}{-1}%
  \endgroup
}
\date{March 1, 2024}
\title{Geographic Spines in the 2020 Census Disclosure Avoidance System}
\begin{document}

\begin{center}
  \maketitle

  \begin{tabular}{cc}
    Ryan Cumings-Menon,\upstairs{\affilone}
    Robert Ashmead,\upstairs{\affilone, \affiltwo}
    Daniel Kifer,\upstairs{\affilone,\affilthree}\\
    Philip Leclerc,\upstairs{\affilone}
    Jeffrey Ocker,\upstairs{\affilfour}
    Michael Ratcliffe,\upstairs{\affilone}
    Pavel Zhuravlev,\upstairs{\affilone}
    John M. Abowd\upstairs{\affilfive} \\
    
  \\[0.25ex]
   {\small \upstairs{\affilone} U.S. Census Bureau} \\
   {\small \upstairs{\affiltwo} The Ohio Colleges of Medicine Government Resource Center} \\
   {\small \upstairs{\affilthree} Penn State University} \\
   {\small \upstairs{\affilfour} Federal Communications Commission, formerly U.S. Census Bureau} \\
   {\small \upstairs{\affilfive} Cornell University, U.S. Census Bureau (retired)} \\
  \end{tabular}
  
   \emails{
    \upstairs{*}We thank Randall Akee and Norman DeWeaver for providing summary metrics that motivated the exploration of alternative geographic spines within the DAS \cite{NAP25978}. We also thank Simson Garfinkel and Nicholas Nagle for early contributions to the concept of using alternative geographic spines within the DAS. The authors are grateful for helpful comments from Justin Doty, Brian Finley, Mark Fleischer, Sallie Keller, and the participants of the DIMACS and Rutgers University Department of Statistics' Workshop on the Analysis of Census Noisy Measurement Files and Differential Privacy on April $27-29,$ 2022. The views expressed in this technical paper are those of the authors and not those of the U.S. Census Bureau. The Census Bureau has reviewed this data product to ensure appropriate access, use, and disclosure avoidance protection of the confidential source data (Project No. 7502798, Disclosure Review Board (DRB) approval number: CBDRB-FY24-CED005-0001).
    }
\end{center}

\begin{abstract}
    The 2020 Census Disclosure Avoidance System (DAS) is a formally private mechanism that first adds independent noise to cross tabulations for a set of pre-specified hierarchical geographic units, which is known as the geographic spine. After post-processing these noisy measurements, the DAS outputs a formally private database with fields indicating location in the standard census geographic spine, which is defined by the United States as a whole, states, counties, census tracts, block groups, and census blocks. This paper describes how the geographic spine used internally within the DAS to define the initial noisy measurements impacts accuracy of the output of the DAS. Specifically, tabulations for geographic areas tend to be most accurate for geographic areas that both 1) can be derived by aggregating together the geographic units of the internal spine other than block geographic units, and 2) are closer to the geographic units of the internal spine. After describing the methods supported by the DAS for defining the internal DAS geographic spine, we provide the settings used to define the 2020 Census production DAS executions. To demonstrate the accuracy impact of the choice of internal spine, we also provide accuracy metrics for three DAS executions using settings similar to the ones used for the 2020 Census Redistricting Data (P.L. 94-171) persons production DAS execution, but using three different choices of internal geographic spine.
\end{abstract}

\vspace*{0.15in}
\hspace{10pt}
  \small	
  \textbf{\textit{Keywords: }} {Differential Privacy, 2020 Census, TopDown Algorithm}

\section{Introduction}

The U.S. Census Bureau's 2020 Disclosure Avoidance System (DAS) was used to produce a formally private file in support of the 2020 Census Redistricting Data (P.L. 94-171) Summary File (hereafter redistricting data) and the 2020 Demographic and Housing Characteristics File (DHC). As described by \cite{Abowd:et:al:2022,cumings2023disclosure} in more detail, the DAS takes the confidential Census Edited File (CEF) as input, which is a set of microdata files with components that include data on individuals, housing units, group quarters (GQs), and households; the output of the DAS is called the Microdata Detailed File (MDF). We call the DAS formally private because it provides privacy guarantees based on either differential privacy (DP) \cite{dwork2006calibrating} or zero-concentrated differential privacy (zCDP) \cite{bun2016concentrated}, as we describe in Section \ref{prelims} in more detail.

Each DAS execution begins by converting the microdata in the CEF to a histogram, \textit{i.e.}, a flattened and fully saturated contingency table, for each census block. Afterward, histograms for less granular \textit{geographic units}, or \textit{geounits}, are created by adding these block level histograms to one another. For example, in early versions of the DAS, these block histograms were added to one another to define the histograms for each census block group. This recursion was repeated to define the histograms for each census tract, county, and state geographic units, and also, in US DAS executions, a final geounit corresponding to the U.S. as a whole.\footnote{ Aside from US DAS executions discussed in this paragraph, there are also Puerto Rico (PR) DAS executions, which have a least granular geounit at the state geographic level that corresponds to the  Commonwealth of Puerto Rico as a whole.} The collection of all of these hierarchical geographic units is known as a \textit{geographic spine}, or a \textit{spine}, and this particular spine is known as the conventional geographic spine, as defined by the Census Bureau Geography Division. The geounits in the geographic spine are split into several levels of granularity, which are known as \textit{geographic levels}, or \textit{geolevels}; for example, each geounit in the conventional geographic spine is a member of either the U.S., state, county, census tract, block group, or the census block geolevels. In other words, the spine can be viewed as a rooted tree, with a root vertex given by the U.S. geounit. This connection with rooted trees also motivates the terms we use to describe adjacency relationships on the spine, such as parent and child geounits. For example, for the conventional spine, the child geounits of the U.S. geounit are the geounits in the state geographic level.

To see the importance of the internal spine used within the DAS, a general overview of what takes place during a DAS execution will be helpful. After the CEF-based histograms of each geounit on the internal spine are computed, the DAS uses a formally private implementing  mechanism to observe noisy measurements for each geounit, which are defined as tabulations, \textit{i.e.}, counts within various population groups within the geounit (such as population counts within each race group), added to realizations of mean-zero random variables. Afterward, the DAS uses the noisy measurements of  the  U.S. geounit, to compute a formally private data histogram for this geounit. Next, the DAS uses the noisy measurements of progressively more granular (lower) geolevels on the spine to estimate formally private data histograms for these geounits, subject to the constraint that these histograms are consistent with the estimates of the next-higher (parent) geolevels. After this recursion is run down to the block geolevel, the output MDF is defined by converting the block histogram estimates to microdata. For more detail on the DAS, including the optimization problems used to define the histogram estimates of each geounit, see \cite{Abowd:et:al:2022,cumings2023disclosure}.

Note that the main topic of this paper is the geographic spine that is used internally in the DAS. Even when this geographic spine does not correspond to the conventional spine, the published statistical tabulations based on the DAS output are still provided for geounits on the conventional spine, defined using the Census Bureau Geography Division's standard definitions.

\subsection{Contributions}

The conventional spine has two primary shortcomings when used for the internal spine in the DAS. First, legally defined American Indian/Alaska Native/Native Hawaiian (AIAN) tribal areas are usually far from this spine, meaning that many on-spine geounits must be added to or subtracted from one another to compose these geographic areas. Being distant from the spine often results in query count estimates with greater mean squared error than on-spine geounits. Second, the conventional spine also places many other important legal, political or census-designated off-spine entities (OSEs) far from the spine, \textit{e.g.}, minor civil divisions (MCDs) and incorporated places.

In order to ameliorate these two issues, the Census Bureau experimented with using alternative spines within the DAS. The purpose of this paper is to describe these alternative spines, the methods we used to define them, and the spines used within the DAS for the production 2020 redistricting and DHC DAS implementations. 

The starting point for these alternative spines is a spine called an \textit{AIAN spine}. The goal for these spines is to ameliorate the issue of AIAN tribal areas being far from the spine. To define an AIAN spine, first let the \textit{Aggregated AIAN Areas OSE} be defined as the collection of all blocks that are within the AIAN areas. An AIAN spine is defined by  replacing geounits in the state geolevel and below with up to two geounits. Specifically, each such original geounit is replaced by geounits defined as the portion of this geounit that is inside, and the portion that is outside, of the Aggregated AIAN Areas OSE.\footnote{In cases in which one of these two portions of the original geounit do not contain any blocks, the original geounit is replaced with a single geounit. For example, since there are not any AIAN areas in Kentucky, an AIAN spine would not include a geounit corresponding to the AIAN portion of Kentucky, but there would be a geounit that corresponds to the non-AIAN portion of Kentucky.} For example, the geounit corresponding to the state of Arizona would be replaced with two state geolevel geounits in an AIAN spine: 1) the portion of Arizona that is inside of the AIAN areas, and 2) the portion of Arizona that is outside of the AIAN areas. 

This process of splitting a spine into an AIAN and a non-AIAN branch at the state geolevel and below can be applied to a variety of different spines. In the case of the 2020 redistricting DAS implementation, we apply this process to the conventional spine; a graphical summary of the 2020 redistricting AIAN spine is shown in Figure \ref{fig:aian_spine}. The 2020 DHC AIAN spine is defined by the output of this same process applied to an alternative input spine, as described in Section \ref{settings} in more detail.

\begin{figure}
    \centering
    \includegraphics[width=90mm]{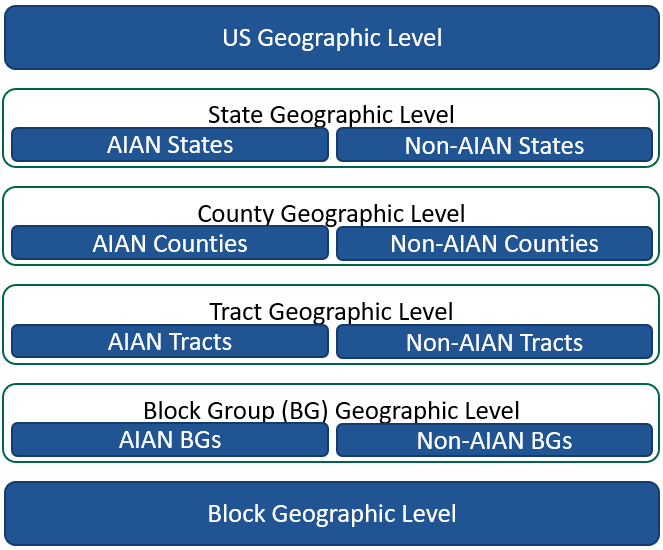}
    \caption{The redistricting AIAN spine, shown without the optional tract group geographic level between the tract and county geographic levels. This spine uses the standard census spine as a starting point and includes an AIAN branch. The optimized spine uses an AIAN spine as a starting point and redefines block groups as optimized block groups.}
    \label{fig:aian_spine}
\end{figure}

This paper describes several optimization heuristics that perform updates to an AIAN spine to enhance specific features of the spine. We call geographic spines that are output from these spine optimization methods \textit{optimized spines}. The spine optimization routines update an input spine in two stages. In the first stage, certain geolevels are redefined to bring a given set of OSEs closer to the spine. For example, all of the spine optimization methods the Census Bureau has experimented with involve redefining the block groups in an AIAN spine as \textit{optimized block groups}. We also describe methods that add additional geolevels, \textit{e.g.}, a geolevel between tracts and counties known as the tract group geolevel, to provide a further improvement in the accuracy of tabulations for the set of target OSEs. After the initial definitions of the geounits have been found in the first stage of the spine optimization routines, the variance of the final histogram estimate is reduced in the second stage of spine optimization by removing certain redundant geounits, which we call the bypassing step. For example, in cases in which a parent geounit has only one child geounit, this decision rule results in bypassing the parent geounit, \textit{i.e.}, removing the parent geounit from the spine, which, as we will see below in more detail, also allows us to decrease the variance of the noisy measurements of the child geounit. Since the consistency-with-parent constraints used in the DAS require that the count estimates of the child are equal to those of the parent in cases in which a parent geounit has only one child, bypassing the parent geounit in this way increases accuracy of the child's histogram estimate by reducing the variance of the noisy measurements that are used to derive this estimate.

In addition to describing the spine optimization routines that are supported by the DAS, we also provide the spine optimization settings of the 2020 production redistricting and DHC DAS executions. To demonstrate the differences in the accuracy provided by the conventional spine, an AIAN spine, and an optimized spine, we compare accuracy metrics for DAS executions that use each of these three spines. 

\subsection{Paper Outline}

The heuristics used in the spine optimization routines are based on existing results from the DP literature, which are provided in Section \ref{prelims}. Section \ref{def_spine} describes how geographic spines are represented in this paper using matrices and provides a result on the privacy guarantees of the DAS when the internal spine is defined as either the conventional spine or an AIAN spine. The methods used in the first stage of spine optimization, \textit{i.e.}, the methods that redefine certain geolevels to bring a given set of OSEs closer to the spine, are described in Section \ref{osed}. Section \ref{bypassing} describes the bypassing method used in the second stage and provides a result on the privacy guarantees of the DAS when an optimized spine is used. Section \ref{settings} describes the geographic spines that were used in the 2020 production DAS executions. Section \ref{sec:metrics} concludes the paper with a comparison of accuracy metrics for redistricting persons DAS executions with an internal spine defined by either the conventional spine, an AIAN spine, or an optimized spine. The next subsection outlines the notation used throughout this paper.

\subsection{Notation}

Throughout the paper we denote matrices using capitalized font, and vectors using lower case bold font. We denote the Kronecker product of real matrices $A,B$ by $A\otimes B,$ and elementwise division, when these matrices are conformable, by $A \oslash B$ respectively. We denote the elementwise absolute value of the real matrix $A$ by $\lvert A \rvert.$ A length $N$ column vector with each element equal to $c\in\rr$ is denoted by $\boldsymbol{c}_N,$ and, when there is little risk of confusion, we omit the subscript $N.$ We use $I_n$ to denote the $n\times n$ identity matrix. We denote the $i^\textrm{th}$ row of $A\in \rr^{m\times n}$ by $A[i,\cdot]$ and the $i^\textrm{th}$ column by $A[\cdot,i].$ Given two datasets (possibly with unequal numbers of rows) $x, x^\prime,$ let $d_{\mathcal{H}}(x, x^\prime)$ denote the minimum number of set addition and subtraction operations with singleton sets required to transform the dataset $x$ to be equal to the dataset $x'.$ We also denote the set of nonnegative real numbers by $\rr_+,$ the positive reals by $\rr_{++},$ and the nonnegative integers by $\zz_+.$ The number of elements in the set $A$ will be denoted by $\textup{Card}(A).$

We make use of several probability distributions, but two are worth pointing out explicitly, as their definitions are less standard. First, let the discrete Gaussian random variable be defined so that its probability mass function, $f:\zz \rightarrow \rr,$ is given by $f(x) \propto \exp(-(x-\mu)^2/(2 \sigma^2)),$ and denote the distribution of this random variable by $\textup{N}_\zz(\mu, \sigma^2);$ see for example, \cite{canonne2020discrete}. In contrast, to denote the (continuous) Gaussian distribution we use the standard notation, $\textup{N}(\mu, \sigma^2).$ We also define the discrete Laplace random variable so that its probability mass function, $f:\zz\rightarrow \rr,$ is given by $f(x) \propto \exp(-\lvert x-\mu \rvert/b),$ and denote the distribution of this random variable by $\textup{Laplace}_\zz(\mu, \sigma^2);$ see for example, \cite{ghosh2012universally}. We also denote the distributions of length $n$ column vectors with each element $i\in\{1,\dots,n\}$ independently distributed as $\textup{N}_\zz(\boldsymbol{\mu}[i], \boldsymbol{c}[i])$, $\textup{Laplace}(\boldsymbol{\mu}[i], \boldsymbol{c}[i])$, and $\textup{Laplace}_\zz(\boldsymbol{\mu}[i], \boldsymbol{c}[i]),$ where $\boldsymbol{\mu}\in\rr^n$ and $\boldsymbol{c}\in \rr^n_+,$ by $\textup{N}_\zz(\boldsymbol{\mu},\diag(\boldsymbol{c})),$ $\textup{Laplace}(\boldsymbol{\mu},\diag(\boldsymbol{c})),$ and $\textup{Laplace}_\zz(\boldsymbol{\mu},\diag(\boldsymbol{c}))$ respectively.

\section{Differential Privacy} \label{prelims}
\subsection{Privacy Definitions}

The definitions of differential privacy and $\rho$--zCDP are provided by \cite{dwork2006our,dwork2006calibrating,bun2016concentrated}, and these definitions are also restated below. Both definitions can be viewed as a bound on the difference between the distribution of a mechanism's output when the input is a given database and when the input is a \textit{neighboring database}. The specific notion of difference used in these definitions, along with their required upper bounds, is generally self-evident, but more care is required for the definition of a neighboring database because there are multiple definitions for this term that are commonly used. For example, two databases are said to be \textit{unbounded neighbors} if one database can be derived by adding \textit{or} removing a record from the other database; likewise, two databases are said to be \textit{bounded neighbors} if one database can be derived from the other by adding one record \textit{and} removing one record \cite{kifer2011no}. In other words, the set of bounded neighbors can be written as $ \{x, x^\prime \in \dat^d \mid  d_{\mathcal{H}}(x, x^\prime) = 2 \},$ where $\dat^d$ is the set of databases with $d$ records, and the set of unbounded neighbors as $\{x, x^\prime \mid  d_{\mathcal{H}}(x, x^\prime) = 1 \}.$

The definition of neighbor has important implications because the upper bounds provided in the privacy guarantees below can be translated into an upper bound on the power of any statistical test with a given significance level of the null hypothesis that the input database is $x,$ versus the alternative that it is a neighbor of $x$ \cite{dong2022gaussian, wasserman2010statistical}. In other words, using bounded neighbors does not provide a theoretical bound on the accuracy of inferences on the number of records in the input database. However, in the case of tests with a null hypothesis that the database is $x$ and an alternative hypothesis that it is a given bounded neighbor of $x,$ formal privacy definitions that use bounded neighbors provide stronger limitations on the accuracy of privacy attackers' inferences. This can be shown for the privacy definitions below using the fact that a bounded neighbor of a given database is also an ``unbounded neighbor of an unbounded neighbor" of the database.

The privacy guarantees provided by the DAS use a unique definition of neighbor that does not protect against certain inferences. Specifically, the DAS does not protect each state's total population or the locations of each group quarters type. In other words, the DAS outputs a database with state total populations that are identical to those of the input database, and, since the frame of a U.S. decennial census excludes addresses that are unoccupied group quarters, at least one person must reside in each group quarter that appears in the census.\footnote{That is, the definition of a living quarter eligible for inclusion in the census of population and housing does not include vacant group quarters. By contrast, housing units may be either occupied or vacant.} The population residing in each group quarter type in a given geounit is constrained to be at least the number of group quarters of that type within the geounit. In addition to these data-dependent constraints, the DAS also imposes data-independent constraints called imputation and edit constraints. For example, respondents that reside in college/university student housing are required to be between the ages of 16 and 65 and respondents in nursing facilities are required to be at least 20 years of age. The imputation and edit constraints imposed on the CEF ensure that each row of the CEF satisfies these constraints, so we also impose these data-independent imputation and edit constraints on the output MDF produced by the DAS. For more detail on the data-dependent and data-independent constraints imposed by the DAS, see \cite{Abowd:et:al:2022,cumings2023disclosure}. 

For this reason, in the context of the DAS, two databases are said to be neighbors if one database can be derived from the other by adding one record to a given state and then removing one record from that same state, such that both databases also satisfy the group quarters invariant constraints and the imputation and edit constraints of the input database of the DAS. To emphasize this definition of neighbors in the presence of invariant constraints in our privacy definitions below, we define the set of pairs of neighboring databases as $\{x, x^\prime \in  \mathcal{U} \mid  d_{\mathcal{H}}(x, x^\prime) = 2 \},$ where $\mathcal{U}$ is a given universal set of databases. For example, this notation is a generalization of the definition of bounded neighbors, which used $\mathcal{U}=\dat^{d}.$ In the results throughout this paper that use these privacy definitions, we denote the set of databases that satisfy these group quarter invariant constraints as well as the imputation and edit constraints of the confidential input data as $\mathcal{G}$ and the set of databases with $\boldsymbol{d}[i]$ records in each state $i$ as $\dat^{\boldsymbol{d}}.$ Using this notation, the definition of neighbors used by the DAS can be written as, $\{x, x^\prime \in \dat^{\boldsymbol{d}} \cap \mathcal{G} \mid  d_{\mathcal{H}}(x, x^\prime) = 2 \}.$

\begin{definition}\label{dp_def}
(Differential Privacy \cite{dwork2006calibrating}) A randomized algorithm $\mech: \mathcal{U} \rightarrow \mathcal{Y}$ satisfies $\epsilon$--differential privacy if, for all $(x, x^\prime) \in \{x, x^\prime \in  \mathcal{U} \mid  d_{\mathcal{H}}(x, x^\prime) = 2 \}$ and all $E \subset \mathcal{Y},$ we have $\textup{P}(\mech(x) \in E) \leq \exp(\epsilon) \textup{P}(\mech(x^\prime) \in E).$
\end{definition}

The definition of $\rho$--zCDP \cite{bun2016concentrated} is provided below; this privacy framework is used for privacy-loss budget (PLB) accounting in both the 2020 redistricting data file and the DHC production DAS executions. Kifer et al. \cite{kifer2022bayesian} provide more information on privacy semantic statements that can be made for the DAS.

\begin{definition}\label{zcdp_def}
    (Zero-Concentrated Differential Privacy (zCDP) \cite{bun2016concentrated}) A randomized mechanism $\mech : \mathcal{U} \rightarrow \mathcal{Y}$ is $\rho$--zero-concentrated differentially private ($\rho$--zCDP) if, for all $x, x' \in  \{x, x^\prime \in  \mathcal{U} \mid  d_{\mathcal{H}}(x, x^\prime) = 2 \}$ and all $\alpha \in (1,\infty)$,
    $$D_\alpha(\mech(x)||\mech(x')) \le \rho \alpha, $$
    where $D_\alpha(P||Q) = \log \left(\sum_{E\in \mathcal{Y}} P(E)^\alpha Q(E)^{(1-\alpha)} \right)/(\alpha-1)$ is the R\'enyi divergence of order $\alpha$ between the distributions $P$ and $Q$.
\end{definition}

\subsection{Privacy-Loss Accounting Results}

A general pattern in our proofs that the DAS is formally private for various spine settings is to first establish that the DP implementing  mechanism for each query $i$ is either $\epsilon_i$--DP or $\rho_i$--zCDP, and then use parallel and sequential composition to show that the combination of all of the DP implementing  query answers used in the DAS  are $\sum_i\epsilon_i$--DP or $\sum_i\rho_i$--zCDP. This section contains these intermediate results that are used in these proofs. The first two results are used in the next subsection to show that each implementing  mechanism is either $\epsilon$--DP or $\rho$--zCDP.

\begin{lemma}\label{laplace_dp_1d}
(Theorem 3.2 \cite{ghosh2012universally}) Let $\Delta, \epsilon > 0$.  Let $q: \mathcal{U} \rightarrow \zz$ satisfy $|q(x)-q(x')| \le \Delta$ for all $x,x' \in  \{x, x^\prime \in  \mathcal{U} \mid  d_{\mathcal{H}}(x, x^\prime) = 2 \}.$ Define a randomized algorithm by $\mech(x) = q(x) + Y$ where either $Y \sim \textup{Laplace}_{\zz}(0, \Delta/\epsilon)$ or $Y \sim \textup{Laplace}(0, \Delta/\epsilon).$  Then $\mech$ satisfies $\epsilon$--DP.
\end{lemma}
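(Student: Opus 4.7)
The plan is to reduce the claim to the well-known pointwise bound on the ratio of (discrete or continuous) Laplace densities at two shifts, then integrate or sum over the event $E$. The argument is essentially identical in the two cases, so I would handle them in parallel and only comment on the discrete case where the summation over $\zz$ matters.

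First, I would fix a pair of neighbors $(x,x') \in \{x,x' \in \mathcal{U} \mid d_{\mathcal{H}}(x,x') = 2\}$ and set $b = \Delta/\epsilon$. The sensitivity hypothesis gives $|q(x) - q(x')| \le \Delta$, and writing $f$ for the density (continuous case) or PMF (discrete case) of $\mathcal{M}(x)$, we have $f(y) \propto \exp(-|y - q(x)|/b)$, with the proportionality constant equal to $1/(2b)$ in the continuous case or $\left(\sum_{k \in \zz} \exp(-|k|/b)\right)^{-1}$ in the discrete case. Crucially, in both cases the normalizing constant depends only on $b$, not on $q(x)$: in the discrete case this is because $\sum_{k \in \zz} \exp(-|k - q(x)|/b) = \sum_{k \in \zz} \exp(-|k|/b)$ by re-indexing $k \mapsto k + q(x)$, which uses $q(x) \in \zz$.

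Next, I would form the pointwise ratio of the density (PMF) of $\mathcal{M}(x)$ to that of $\mathcal{M}(x')$ at an arbitrary point $y$. The normalizing constants cancel, yielding
$$\frac{f_{\mathcal{M}(x)}(y)}{f_{\mathcal{M}(x')}(y)} = \exp\!\left(\frac{|y - q(x')| - |y - q(x)|}{b}\right) \le \exp\!\left(\frac{|q(x) - q(x')|}{b}\right) \le \exp(\Delta/b) = \exp(\epsilon),$$
where the first inequality is the reverse triangle inequality and the second uses the sensitivity bound.

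Finally, to obtain the DP guarantee, I would integrate (respectively, sum) this bound over any $E \subset \mathcal{Y}$:
$$\textup{P}(\mathcal{M}(x) \in E) \;=\; \int_E f_{\mathcal{M}(x)}(y)\,dy \;\le\; \exp(\epsilon) \int_E f_{\mathcal{M}(x')}(y)\,dy \;=\; \exp(\epsilon)\,\textup{P}(\mathcal{M}(x') \in E),$$
with the integral replaced by a sum over $E \cap \zz$ in the discrete case. This is exactly the $(\epsilon,0)$--DP condition of Definition \ref{dp_def}.

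The only subtlety worth flagging is the translation-invariance of the discrete Laplace normalizer, which relies on $q$ being integer-valued so that $q(x) \in \zz$; this is guaranteed by the hypothesis $q : \mathcal{U} \to \zz$. Everything else is routine manipulation of the Laplace form and the triangle inequality, so I do not expect any real obstacle.
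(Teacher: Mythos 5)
Your argument is correct and is the standard density-ratio proof of the (discrete or continuous) Laplace mechanism: translation-invariance of the normalizer, the triangle inequality giving a pointwise ratio bound of $\exp(|q(x)-q(x')|/b)\le\exp(\epsilon)$, and summation/integration over $E$. The paper does not prove this lemma itself --- it is imported as Theorem 3.2 of the cited work --- so there is nothing to contrast with; your handling of the one genuine subtlety (integer-valuedness of $q$ so that the discrete normalizer $\sum_{k\in\zz}\exp(-|k-q(x)|/b)$ is independent of $x$) is exactly the point worth flagging.
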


\begin{lemma}\label{g_cdp_1d}
(Theorem 4 \cite{canonne2020discrete}) Let $\Delta, \epsilon > 0$.  Let $q: \mathcal{U} \rightarrow \zz$ satisfy $|q(x)-q(x')| \le \Delta$ for all $x,x' \in  \{x, x^\prime \in  \mathcal{U} \mid  d_{\mathcal{H}}(x, x^\prime) = 2 \}.$ Define a randomized algorithm by $\mech(x) = q(x) + Y$ where either $Y \sim \textup{N}_{\zz}(0, \Delta^2/(2\rho))$ or $Y \sim \textup{N}(0, \Delta^2/(2\rho)).$ Then $\mech$ satisfies $\rho$--zCDP.
\end{lemma}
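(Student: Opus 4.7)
The plan is to reduce the statement to a standard computation of the R\'enyi divergence between two (discrete or continuous) Gaussian distributions with the same variance but shifted means, and then apply the sensitivity bound on $q$. In other words, fix a pair of neighbors $x, x' \in \mathcal{U}$ and let $\mu = q(x)$, $\mu' = q(x')$, $\sigma^2 = \Delta^2/(2\rho)$. Since $\mech(x) = \mu + Y$ and $\mech(x') = \mu' + Y$ are just translates of a common noise distribution, their R\'enyi divergence depends only on the gap $\mu - \mu'$ and on $\sigma^2$, and the mechanism is $\rho$--zCDP iff $D_\queryprop(\mech(x)\|\mech(x'))\le \rho\queryprop$ for every $\queryprop \in (1,\infty)$ and every such neighboring pair.

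For the continuous Gaussian case ($Y \sim \textup{N}(0,\sigma^2)$) I would carry out the standard direct calculation: the density ratio at $y$ is a log-quadratic in $y$, so the integral $\int f_{\mu}(y)^\queryprop f_{\mu'}(y)^{1-\queryprop}\, dy$ is Gaussian and evaluates in closed form. Completing the square and taking logs yields
\[
D_\queryprop\!\left(\textup{N}(\mu,\sigma^2)\,\|\,\textup{N}(\mu',\sigma^2)\right) \;=\; \frac{\queryprop(\mu-\mu')^2}{2\sigma^2}.
\]
Applying the sensitivity bound $(\mu-\mu')^2 = (q(x)-q(x'))^2 \le \Delta^2$ together with $\sigma^2 = \Delta^2/(2\rho)$ gives $D_\queryprop \le \rho\queryprop$, as required.

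For the discrete Gaussian case ($Y \sim \textup{N}_\zz(0,\sigma^2)$), the same strategy works in outline but the normalizing constants of $\textup{N}_\zz$ no longer cancel exactly after the log-quadratic expansion, so the naive continuous calculation only gives an approximate identity. This is the main obstacle. The clean way to handle it is to invoke the discrete Gaussian R\'enyi divergence bound established by Canonne, Kamath, and Steinke (Theorem 4 of \cite{canonne2020discrete,NEURIPS2020_b53b3a3d}), which shows
\[
D_\queryprop\!\left(\textup{N}_\zz(\mu,\sigma^2)\,\|\,\textup{N}_\zz(\mu',\sigma^2)\right) \;\le\; \frac{\queryprop(\mu-\mu')^2}{2\sigma^2}
\]
for integer $\mu,\mu'$ and all $\queryprop>1$; their proof proceeds by a careful comparison of the discrete partition functions (using a Poisson summation/Jacobi theta inequality) to show that the log-ratio of normalizers has a favorable sign, so the discrete bound is dominated by the continuous one. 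With this bound in hand, the remainder of the argument is identical to the continuous case: substitute $\sigma^2 = \Delta^2/(2\rho)$ and use $|q(x)-q(x')|\le \Delta$ to conclude $D_\queryprop(\mech(x)\|\mech(x')) \le \rho\queryprop$ for all neighbors and all $\queryprop \in (1,\infty)$, which is exactly Definition~\ref{zcdp_def}.
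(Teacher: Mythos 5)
Your proposal is correct and follows the same route as the paper, which states this lemma without proof as a restatement of Theorem 4 of \cite{canonne2020discrete}: the continuous case is the standard closed-form computation $D_\queryprop(\textup{N}(\mu,\sigma^2)\|\textup{N}(\mu',\sigma^2))=\queryprop(\mu-\mu')^2/(2\sigma^2)$ followed by the sensitivity bound, and the discrete case is exactly the Canonne--Kamath--Steinke partition-function argument you cite. Your description of why the discrete normalizers do not cancel naively (the shifted center $\queryprop\mu+(1-\queryprop)\mu'$ need not be an integer, so a Poisson-summation-type inequality is needed) is the right account of the one genuinely nontrivial step.
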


While providing a comprehensive compilation of the properties that are implied by a mechanism satisfying either $\epsilon$--DP or $\rho$--zCDP is beyond the scope of this paper, a few of these properties are worth explicitly pointing out. First, these privacy guarantees share the property that they are invariant to post-processing. In other words, if $\mech : \mathcal{U} \rightarrow \mathcal{Y}$ satisfies either of these privacy guarantees, then so does the mechanism $f \circ \mech,$ where $f:\mathcal{Y} \rightarrow \mathcal{Z}.$ Second, if $\mech_1(x)$ is $\epsilon_1$--DP (respectively, $\rho_1$--zCDP) and $\mech_2(x)$ is $\epsilon_2$--DP ($\rho_2$--zCDP) then releasing the output of $\mech_1(x)$ and $\mech_2(x)$ simultaneously is itself $(\epsilon_1+\epsilon_2)$--DP ($(\rho_1+\rho_2)$--zCDP), which is called sequential composition. Also, when $\mech_1(x),\mech_2(x),\dots,\mech_k(x)$ only depend on disjoint subsets of the input dataset $x,$ releasing the outputs of these mechanisms simultaneously is $(\max\;\{\epsilon_1,\epsilon_2\})$--DP (respectively, $(\max\;\{\rho_1,\rho_2\})$--zCDP) when using unbounded neighbors and $\max_{i\neq j} \epsilon_i+\epsilon_j$ (respectively, $(\max_{i\neq j} \rho_i+\rho_j)$--zCDP) when using bounded neighbors, which is known as parallel composition. The following lemma uses sequential and parallel composition to provide privacy guarantees for a mechanism that is defined by multiple univariate mechanisms described in Lemmas \ref{laplace_dp_1d} and \ref{g_cdp_1d}.

\begin{lemma}\label{composition}
(Theorem 14 \cite{canonne2020discrete}, Proposition 1 \cite{dwork2006our}, Theorem 3.2 \cite{ghosh2012universally})
Suppose $x,x' \in \mathcal{X}^d$ differ on a single entry, and let a randomized algorithm $\mech: \mathcal{X}^d \rightarrow \mathcal{Y}^m$ be defined by $\mech(x) = q(x) + \boldsymbol{y}$ where $q:\mathcal{X}^d \rightarrow \mathcal{Y}^m$ and $\boldsymbol{y}$ is an $m$ dimensional column vector of independent random variables. Then we have the following.
\begin{enumerate}
    \item Let $\boldsymbol{b}\in \rr^m_{++},$ $\epsilon>0,$ and, for all neighbors $x,x' \in \mathcal{X}^d,$ suppose $\sum_{j=1}^m \lvert q(x)[j] - q(x')[j]\rvert / \boldsymbol{b}[j] \le \epsilon.$
    If, for all $j \in \{1,\dots, m\},$ either $\boldsymbol{y}_{j} \sim \textup{Laplace}_{\zz}(\mathbf{0}, \boldsymbol{b}[j])$ or $\boldsymbol{y}_{j} \sim \textup{Laplace}(\mathbf{0}, \boldsymbol{b}[j]),$ then $\mech$ satisfies $\epsilon$--DP.
    \item Let $\boldsymbol{\sigma^2}\in\rr^m_{++},$ $\rho>0,$ and, for all neighbors $x,x' \in \mathcal{X}^d,$ suppose $\sum_{j=1}^m \frac{(q(x)[j] - q(x')[j])^2}{\boldsymbol{\sigma^2}[j]} \le 2 \rho.$
    If, for all $j \in \{1,\dots, m\},$ either $\boldsymbol{y}_{j} \sim \textup{N}_{\zz}(\mathbf{0}, \boldsymbol{\sigma^2}[j])$ or $\boldsymbol{y}_{j} \sim \textup{N}(\mathbf{0}, \boldsymbol{\sigma^2}[j]),$ then $\mech$ satisfies $\rho$--zCDP.
\end{enumerate}
\end{lemma}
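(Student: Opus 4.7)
The plan is to prove both parts by exploiting the independence of the noise coordinates to reduce each multivariate statement to a sum of per-coordinate bounds. Since $\mech(x) = q(x) + \boldsymbol{y}$ with $\boldsymbol{y}$ having independent entries, the joint density (or joint probability mass function) of $\mech(x)$ factors as $\prod_j f_j(z_j - q_j(x))$, where $f_j$ is the marginal distribution of $\boldsymbol{y}_j$. Consequently both the pointwise log-likelihood ratio and the R\'enyi divergence decompose additively over $j$, so the main work is to bound each summand in terms of $|q_j(x) - q_j(x')|$ and then sum.

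For Part 1, I would fix an arbitrary output $\boldsymbol{z}$ and write the log-likelihood ratio as $\sum_j \log\bigl(f_j(z_j - q_j(x))/f_j(z_j - q_j(x'))\bigr)$. In both the continuous and discrete Laplace cases, $f_j$ is proportional to $\exp(-|\cdot|/b_j)$ with a normalizing constant depending only on $b_j$, so that constant cancels in each ratio and each summand equals $(|z_j - q_j(x')| - |z_j - q_j(x)|)/b_j$. The reverse triangle inequality bounds this by $|q_j(x) - q_j(x')|/b_j$, and summing gives the pointwise log-ratio bound $\sum_j |q_j(x) - q_j(x')|/b_j \le \epsilon$. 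Integrating or summing the density ratio over any event $E$ then yields the desired $\epsilon$-DP guarantee.

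For Part 2, I would instead work directly with the R\'enyi divergence, which tensorizes across independent coordinates: $D_\queryprop(\mech(x)\|\mech(x')) = \sum_j D_\queryprop(\mu_j \| \mu'_j)$, where $\mu_j$ and $\mu'_j$ are the marginals for coordinate $j$ (either shifted continuous Gaussians or shifted discrete Gaussians with variance parameter $\sigma_j^2$). For two equal-variance continuous Gaussians, a classical computation gives $D_\queryprop = \queryprop (q_j(x) - q_j(x'))^2/(2\sigma_j^2)$ exactly; for discrete Gaussians, the corresponding upper bound is the univariate statement from \cite{canonne2020discrete} already invoked in Lemma \ref{g_cdp_1d}. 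Summing over $j$ yields $D_\queryprop(\mech(x)\|\mech(x')) \le \queryprop \sum_j (q_j(x) - q_j(x'))^2/(2\sigma_j^2) \le \rho \queryprop,$ which is exactly the definition of $\rho$-zCDP.

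The one point requiring care is that the statement allows each coordinate to be independently either continuous or discrete. For the Laplace case this causes no difficulty, since the density/mass ratio calculation takes the same closed form in both regimes. For the Gaussian case, the continuous identity is classical and the discrete inequality from \cite{canonne2020discrete} supplies the matching bound, so the tensorized sum still delivers the claimed $\rho \queryprop$ ceiling even when distribution types are mixed across coordinates. Beyond this, everything is a routine application of sequential composition; I do not anticipate any substantive obstacle.
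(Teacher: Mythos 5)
Your argument is correct: the paper does not actually prove this lemma but simply cites it from the literature, and your reconstruction (factorizing the joint law over independent coordinates, bounding the per-coordinate Laplace log-likelihood ratios via the reverse triangle inequality, and using additivity of the R\'enyi divergence together with the exact continuous-Gaussian identity and the discrete-Gaussian bound of Canonne et al.) is precisely the standard proof underlying those citations. The only point you leave implicit is that in the discrete cases the shifts $q_j(x)-q_j(x')$ are integers, so the shifted supports coincide and the mass ratios are well defined everywhere; this holds in the paper's setting and does not affect correctness.
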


\subsection{Linear Queries and Marginal Query Groups}

The notation for each query $q(\cdot)$ in the preceding subsections is more general than required to describe the DAS. Specifically, the DAS uses only linear queries. We define a \textit{linear query} as a linear map represented by a length $n$ vector, and a linear query answer as the inner product of a linear query and the histogram cell counts vector. Since we only consider linear queries from this point forward, we sometimes refer to a linear query simply as a \textit{query} below. We also define a \textit{query matrix} as the matrix representation of a collection of linear queries, \textit{i.e.}, a matrix defined by vertically stacking the length $n$ row vector representations of the collection of linear queries together.

All elements of the linear queries used within the DAS have elements that are either zero or one, and they are defined so that each linear query answer provides an individual count for a marginal of the full histogram. It is convenient to group the queries providing counts for the same marginal together, so we will define a \textit{query group} as the linear queries that provide counts for the same marginal, vertically stacked on top of one another. For example, suppose the schema of the original database is CENRACE $\times$ HISPANIC $\times$ VOTINGAGE, where CENRACE, HISPANIC, and VOTINGAGE indicate one of 63 census race combination categories, one of two ethnicity categories, and one of two age categories, respectively.\footnote{While we are using attributes that are part of the census, note that this example, and the queries in this section, are simply described in the context of a hypothetical survey that does not use a geographic spine. The linear query notation introduced here is extended in the next section to account for the presence of the geographic spine.} In this case, the CENRACE $\times$ VOTINGAGE query group matrix is defined as the matrix

$$
Q = I_{63} \otimes \boldsymbol{1}^\top_2 \otimes I_2,
$$

\noindent
and right multiplying $Q$ by the vector of histogram cell counts of a database provides the CENRACE $\times$ VOTINGAGE query group answers. More generally, the query matrix of a given query group is defined as a matrix that can be written as a series of Kronecker products of identity matrices and row vectors of ones. Note that this definition encompasses cases in which each of these Kronecker factors are all either row vectors of ones, \textit{i.e.}, the total population query group matrix, or are all identity matrices, \textit{i.e.}, the detailed cell query group matrix.

One of the inputs of the DAS is the proportion of PLB allocated to each query $i$ of a given geolevel, which we denote by $\queryprop_i\in \rr_{++}.$ The following lemma uses this notational convention, while only considering a single geounit, to provide the $\epsilon$--DP and $\rho$--zCDP privacy guarantees of releasing the output of the implementing  mechanisms of this geounit.\footnote{Alternatively, for datasets that contain an attribute identifying each respondent's geounit for a given geolevel, this theorem can also be used to provide the privacy guarantees of releasing the output of all of the implementing  DP mechanisms for the geolevel. However, depending on how the operation of bypassing geounits is defined, this can be less straightforward in the case of the optimized spine because there may be respondents in the dataset that are not included in any geounit in the geolevel. This alternative interpretation of the Lemma is discussed in more detail in Sections \ref{def_spine}, and the operation of bypassing geounits is defined in \ref{bypassing}.} This is used in the next section to provide the privacy guarantees of the DAS when either the conventional spine or an AIAN spine is used. Note that this lemma also makes use of a notational convention that we use throughout the rest of the paper; the histogram of detailed cell counts computed on the confidential dataset $x\in \dat^d$ is denoted by $\boldsymbol{x}.$

\begin{lemma}\label{per_geostrat_is_dp}
If each $Q[i]\in\{Q[i]\}_{i=1}^q$ is a query group matrix of dimension $m[i] \times n$ and the mechanism $\mech(\boldsymbol{x}),$ outputs $\{Q[i] \boldsymbol{x} + \boldsymbol{y}_{i}\}_{i=1}^q,$ where $\boldsymbol{x}$ is a vector of the histogram counts for $x\in\dat^d,$ and $\boldsymbol{y}_{i}$ is a length $m[i]$ column vector of independent random variables, then we have the following.
\begin{enumerate}
    \item If, for each $i\in\{1,\dots, q\},$ either 
    \begin{center}
        $\boldsymbol{y}_{i} \sim \textup{Laplace}(\mathbf{0}, \diag(\mathbf{2}_{m[i]}/(\epsilon\queryprop_i)))$ or $\boldsymbol{y}_{i} \sim \textup{Laplace}_\zz(\mathbf{0}, \diag(\mathbf{2}_{m[i]} /(\epsilon\queryprop_i) ) ),$
    \end{center}
    \noindent
    where $\sum_i\queryprop_i=1,$ then $\mech(\cdot)$ satisfies $\epsilon$--DP.
    \item If, for each $i\in\{1,\dots, q\},$ either 
    \begin{center}
    $\boldsymbol{y}_{i} \sim \textup{N}(\mathbf{0}, \diag(\mathbf{1}_{m[i]}/(\rho\queryprop_i)))$ or $\boldsymbol{y}_{i} \sim \textup{N}_\zz(\mathbf{0}, \diag(\mathbf{1}_{m[i]}/(\rho\queryprop_i))),$ 
    \end{center}
    \noindent
    where $\sum_i\queryprop_i=1,$ then $\mech(\cdot)$ satisfies $\rho$--zCDP.
\end{enumerate}
\end{lemma}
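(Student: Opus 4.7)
\medskip
\noindent\textbf{Proof plan.} The strategy is to reduce both parts to Lemma \ref{composition} by vertically stacking the $q$ query groups into a single query matrix $[Q[1]^\top, \dots, Q[q]^\top]^\top$, with concatenated independent noise vectors. The core structural observation driving the sensitivity calculation is that every query group $Q[i]$, being a Kronecker product each of whose factors is either an identity matrix or a row vector $\boldsymbol{1}^\top_m$, has \emph{exactly one entry equal to $1$ in each column} (and zeros elsewhere). This can be verified by induction on the number of Kronecker factors: both $I_n$ and $\boldsymbol{1}^\top_m$ have the property, and it is preserved by $A \otimes B$ because each column of $A \otimes B$ is the Kronecker product of a column of $A$ with a column of $B$.

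\medskip
Given this structure, I would translate a neighboring pair $(x,x')$ with $d_{\mathcal{H}}(x,x')=2$ into histogram vectors $\boldsymbol{x}, \boldsymbol{x}'$ satisfying $\|\boldsymbol{x}-\boldsymbol{x}'\|_1 \le 2$, with at most one $+1$ entry and at most one $-1$ entry (the histograms may even coincide, if the added and removed records share a cell). Combined with the one-nonzero-per-column property, $Q[i](\boldsymbol{x} - \boldsymbol{x}')$ has at most two nonzero entries, each of absolute value $1$, yielding
$$
\| Q[i](\boldsymbol{x} - \boldsymbol{x}') \|_1 \le 2 \quad \text{and} \quad \| Q[i](\boldsymbol{x} - \boldsymbol{x}') \|_2^2 \le 2.
$$

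\medskip
With the sensitivity bounds in hand, both parts are finished by direct substitution into Lemma \ref{composition}. For part (1), taking $b_{ij} = 2/(\epsilon\queryprop[i])$ gives
$$
\sum_{i=1}^q \sum_{j=1}^{m[i]} \frac{|Q[i]_j \boldsymbol{x} - Q[i]_j \boldsymbol{x}'|}{b_{ij}} = \sum_{i=1}^q \frac{\epsilon \queryprop[i]}{2} \, \| Q[i](\boldsymbol{x} - \boldsymbol{x}') \|_1 \le \epsilon \sum_{i=1}^q \queryprop[i] = \epsilon,
$$
so Lemma \ref{composition}(1) delivers $\epsilon$--DP. For part (2), taking $\sigma^2_{ij} = 1/(\rho\queryprop[i])$ analogously gives $\sum_{i,j}(Q[i]_j \boldsymbol{x} - Q[i]_j \boldsymbol{x}')^2/\sigma^2_{ij} \le 2\rho\sum_i \queryprop[i] = 2\rho$, and Lemma \ref{composition}(2) delivers $\rho$--zCDP. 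The only conceptual step worth dwelling on is the structural claim in the first paragraph: verifying that the Kronecker definition of ``query group'' really forces the one-nonzero-per-column property that underlies both the $L_1$ and $L_2$ sensitivity bounds. Everything else is routine arithmetic, and Lemma \ref{composition} already covers both the discrete and continuous variants of the noise simultaneously.
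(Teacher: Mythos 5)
Your proposal is correct and follows essentially the same route as the paper: both rest on the observation that each column of a query group matrix has exactly one nonzero entry (equal to one), deduce $\sum_j \lvert Q[i]_{j,\cdot}(\boldsymbol{x}-\boldsymbol{x}')\rvert \le 2$ for neighboring histograms, and conclude via Lemma~\ref{composition}. The only cosmetic difference is that you apply Lemma~\ref{composition} once to the full stacked mechanism, whereas the paper applies it per query group and then invokes sequential composition over $i$ — the arithmetic is identical.
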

\begin{proof}
Since, for each $i\in \{1,\dots, q\},$ only one element of each column of $Q[i]$ is nonzero, we use Lemma \ref{composition} to leverage parallel composition to prove both cases. This reduces the problem to carrying out sequential composition of the set of mechanisms, $\{\mech[i](\boldsymbol{x})\}_i^q,$ where $\mech[i](\boldsymbol{x}) := Q[i] \boldsymbol{x} + \boldsymbol{y}_{i}.$ In both cases, we use the fact that $\lvert Q[i] \boldsymbol{x} - Q[i] \boldsymbol{x}^\prime \rvert,$ where $\boldsymbol{x},\boldsymbol{x}^\prime$ are vectors of histogram cell counts of databases differing on a single entry, is a vector with at most two elements that are equal to one, with the remaining elements equal to zero.

In the first case, since $\sum_j  \lvert Q[i][j,\cdot] \boldsymbol{x} - Q[i][j,\cdot] \boldsymbol{x}^\prime \rvert \leq 2,$ the first case in Lemma \ref{composition} implies $\mech[i](\boldsymbol{x})$ is $\queryprop_i\epsilon$--DP. Thus, using sequential composition, releasing the output of $\{\mech[i](\boldsymbol{x})\}_i$ is $\sum_i\queryprop_i \epsilon =\epsilon$--DP.

In the second case, since $\sum_j  \lvert Q[i][j,\cdot] \boldsymbol{x} - Q[i][j,\cdot] \boldsymbol{x}^\prime \rvert ^2 \leq 2,$ the second case in Lemma \ref{composition} implies $\mech[i](\boldsymbol{x})$ is $\queryprop_i\rho$--zCDP. Thus, using sequential composition, releasing the output of $\{\mech[i](\boldsymbol{x})\}_i$ is $\sum_i\queryprop_i \rho =\rho$--zCDP.
\end{proof}

\subsection{The Matrix Mechanism} \label{matmech}

This section describes a class of random mechanisms known as matrix mechanisms in the context of linear queries composed of query groups \cite{li2010optimizing}. Using the notation introduced in Lemma \ref{per_geostrat_is_dp}, we assume that the errors of the $i^{\textup{th}}$ query group are random variables distributed as either $\boldsymbol{y}_{i} \sim \textup{Laplace}(\mathbf{0}, \boldsymbol{2}_{m[i]} /(\epsilon\queryprop_i)),$ in the case of $\epsilon$--DP, or $\boldsymbol{y}_{i} \sim \textup{N}(\mathbf{0}, \boldsymbol{1}_{m[i]}/(\rho\queryprop_i)),$ in the case of $\rho$--zCDP, where $\queryprop_i\in\rr_{++}$ satisfies $\sum_i\queryprop_i = 1.$ Also, let the \textit{workload} matrix $W \in \{0,1\}^{m\times n}$ be defined as $W := \stack(\{Q[i]\}_{i=1}^q),$ the errors vector as $\boldsymbol{y} := \stack(\{\boldsymbol{y}_{i}\}_i^q),$ the PLB proportions vector as $\boldsymbol{\queryprop} := \stack(\{\mathbf{1}_{m[i]}\queryprop_i\}_{i=1}^q),$ and the response variable as $\boldsymbol{z} := W \boldsymbol{x} + \boldsymbol{y},$ which is also the vertically stacked output of the mechanism described in Lemma \ref{per_geostrat_is_dp}. We assume that $W$ has full column rank, which can be ensured by including the detailed cell query group in $W,$ \textit{i.e.}, the identity matrix.

For both of the distributions that we consider in this section, we define an alternative response variable that is observationally equivalent to $\boldsymbol{z},$ but with homoscedastic errors. For example, in the case of $\epsilon$--DP, we have $\boldsymbol{y} \sim \textup{Laplace}(\mathbf{0}, \diag(\boldsymbol{2}\oslash (\epsilon\boldsymbol{\queryprop}))),$ so we can define the rescaled errors as $\tilde{\boldsymbol{y}} := \diag(\epsilon\boldsymbol{\queryprop}/2) \boldsymbol{y},$ which are distributed as $ \tilde{\boldsymbol{y}} \sim \textup{Laplace}(\mathbf{0}, I),$ the rescaled workload as $\widetilde{W}:=\diag(\epsilon\boldsymbol{\queryprop}/2) W,$ and the rescaled response variable as $\tilde{\boldsymbol{z}} := \diag(\epsilon\boldsymbol{\queryprop}/2) \boldsymbol{z} = \widetilde{W} \boldsymbol{x} + \tilde{\boldsymbol{y}}.$ In the case of $\rho$--zCDP, we have $\boldsymbol{y} \sim \textup{N}(\mathbf{0}, \diag(\boldsymbol{1}\oslash (\rho\boldsymbol{\queryprop}))),$ so we can define the rescaled errors as $\tilde{\boldsymbol{y}} := \diag(\sqrt{\rho\boldsymbol{\queryprop}}) \boldsymbol{y},$ which are distributed as $ \tilde{\boldsymbol{y}} \sim \textup{N}(\mathbf{0}, I),$ the rescaled workload as $\widetilde{W}:=\diag(\sqrt{\rho\boldsymbol{\queryprop}}) W,$ and the rescaled response variable as $\tilde{\boldsymbol{z}} := \diag(\sqrt{\rho\boldsymbol{\queryprop}}) \boldsymbol{z} = \widetilde{W} \boldsymbol{x} + \tilde{\boldsymbol{y}}.$ 

One simple example of a matrix mechanism is a mechanism that releases the weighted least squares estimates of $W \boldsymbol{x}.$ Specifically, this can be done by first estimating $\boldsymbol{x}$ as $\boldsymbol{\hat{x}} := \argmin_{\boldsymbol{x}} \lVert  \widetilde{W} \boldsymbol{x} - \tilde{\boldsymbol{z}} \rVert^2_2= (\widetilde{W}^\top \widetilde{W})^{-1} \widetilde{W}^\top \tilde{\boldsymbol{z}},$ and then defining the output of the mechanism as $W \boldsymbol{\hat{x}}.$ Note that, in either the case of $\epsilon$--DP or $\rho$--zCDP, the privacy guarantee of the final mechanism follows from the fact that a mechanism that released $\tilde{\boldsymbol{z}}$ would satisfy the same privacy guarantee, along with the invariance to post-processing property.

This simple mechanism can be generalized by making a distinction between the linear query answers provided as output and the linear queries used to estimate $\boldsymbol{\hat{x}}.$ Specifically, let the\textit{strategy} matrix $A\in\{0,1\}^{p \times n}$ be defined as $A := \stack(\{R[i]\}_i^r),$ where each $R[i]\in \{R[i]\}_i^r$ is a query group matrix, and the query group PLB proportions as $\{\geopropStrat_i\}_{i}^r.$ We also assume that the strategy matrix has full column rank. In the same manner as described above for $W,$ we define the rescaled strategy matrix as $\widetilde{A}:=\diag(\epsilon\boldsymbol{\geopropStrat}/2) A,$ when deriving an $\epsilon$--DP mechanism, or $\widetilde{A}:=\diag(\sqrt{\rho\boldsymbol{\geopropStrat}}) A,$ when deriving a $\rho$--zCDP mechanism. Likewise, the rescaled error vector and response variable are defined as above so that $\tilde{\boldsymbol{z}} := \widetilde{A} \boldsymbol{x} + \tilde{\boldsymbol{y}}.$ The final output of this matrix mechanism is then given by this alternative estimate of the linear queries in the workload $W,$ which are $W\boldsymbol{\hat{x}} = W (\widetilde{A}^\top \widetilde{A})^{-1} \widetilde{A}^\top \tilde{\boldsymbol{z}}.$\footnote{This discussion is also straightforward to extend to the more general case in which the workloads and strategy matrices are not required to be vertically stacked query group matrices. We focus on these specific workload and strategy matrices because they are similar to the ones we consider in the rest of the paper.}

Note that the variance matrix of this output vector is given by,

$$\textup{Var}(W\boldsymbol{\hat{x}}) = W (\widetilde{A}^\top \widetilde{A})^{-1} W^\top.$$

\noindent
Past work focuses on using this variance matrix to find strategies that provide a low expected sum of squared errors, which is given by $\textup{Trace}(\textup{Var}(W\boldsymbol{\hat{x}})) = \textup{Trace}(W^\top W (\widetilde{A}^\top \widetilde{A})^{-1});$ see for example, \cite{li2010optimizing, mckenna2018optimizing}. We use an alternative approach to motivate the heuristic used to bypass geounits in Section \ref{bypassing}, after introducing how we represent the spine using matrices in the next section.

\section{Representing Linear Queries on the Spine} \label{def_spine}

In this section we describe how we represent the workload and the strategy matrices that include the linear queries for each geounit on the spine. To do so, we first consider the case in which the same query groups are used in all such geounits, and generalize this notation afterward to the case where the query matrix is dependent on the geolevel. 

Let the strategy matrix of the linear queries for each geounit be denoted by $B \in \{0,1\}^{m\times n},$ where $B=\stack(\{Q[i]\}_i^q)$ and each $Q[i]$ is a query group matrix. We assign an integer to each block geounit for the purpose of ordering the blocks. Specifically, suppose that the blocks are ordered lexicographically so that blocks in the same state are adjacent to one another, within each state, blocks in the same county are adjacent, etc. For example, in the case of the conventional geographic spine, this can be achieved by sorting the blocks by their 15 digit census GEOID, as the format of the GEOID is [2 digit state FIPS code][3 digit county FIPS code][6 digit census tract code][4 digit census block code].\footnote{Federal Information Processing Standard (FIPS) codes are fixed width codes assigned to states and counties. Each two digit state FIPS code is unique, and within each state, each three digit county FIPS code is also unique.} We refer to geounit $u \in \{1,\dots,U[l]\}$ in geolevel $l\in \{1,\dots, L\}$ as geounit $(l,u).$

Now let $b[l,u]$ denote the number of block level descendants of geounit $(l,u).$ For example, since all census blocks are descendants of the root geounit, there is a total of $b[1,1]$ blocks.\footnote{For the Commonwealth of Puerto Rico, all blocks are descendant from the PR geounit.} Let $\boldsymbol{x}\in \zz^{n b[1,1]}_+$ be a vector of histogram cell counts over all blocks, ordered in the manner described in the preceding paragraph. Using this notation, we can express the query answers to the query matrix $B$ for the root geounit in terms of these block-level cell counts as $(\mathbf{1}^\top_{b[1,1]} \otimes B) \boldsymbol{x}.$ Likewise, the query answers for all block geounits is $(I_{b[1,1]}\otimes B) \boldsymbol{x}.$ More generally, the query answers of geolevel $l$ can be expressed in terms of the block-level cell counts by $(A[l]\otimes B)\boldsymbol{x},$ where  $A[l]:=\textup{block\_diag}(\{\boldsymbol{1}_{b[l,u]}^{\top}\}_{u=1}^{U[l]})$ and $\textup{block\_diag}(\{E[1],E[2],\dots\})$ denotes a block diagonal matrix with the (possibly non-square) matrices $E[1],E[2],\dots$ along the block diagonal. We will refer to $A:=\stack(\{A[l]\}_{l=1}^L)$ as the matrix representation of the spine throughout the paper because this matrix encodes the adjacency relationships between geounits on the spine, as described in the following example. 

\begin{example}
    In this example we will describe a matrix representation of a simple spine. Specifically, suppose there are three geolevels. For geolevel $l=1,$ suppose the root (or US) geounit has two children in geolevel $l=2.$ Suppose the first geounit of geolevel $l=2$ has one child geounit in geolevel $l=L=3,$ and the second geounit in geolevel $l=2$ has two children. In this case, $A[1],A[2],A[3],$ and the matrix representation of the spine, \textit{i.e.}, the matrix $A,$ are given by 

    \begin{align*}
        A[1] &= \textup{block\_diag}(\{\mathbf{1}_{b[1,1]}^\top\}) = \mathbf{1}_3^\top = \begin{bmatrix}
            1 & 1 & 1
        \end{bmatrix}, \\
        A[2] &= \textup{block\_diag}(\{\mathbf{1}_{b[2,1]}^\top, \mathbf{1}_{b[2,2]}^\top\}) = \begin{bmatrix}
            1 & 0 & 0 \\
            0 & 1 & 1 
        \end{bmatrix}, \\
        A[3] &= \textup{block\_diag}(\{\mathbf{1}_{b[3,1]}^\top, \mathbf{1}_{b[3,2]}^\top,\mathbf{1}_{b[3,3]}^\top\}) = I_3 =  \begin{bmatrix}
            1 & 0 & 0 \\
            0 & 1 & 0 \\
            0 & 0 & 1
        \end{bmatrix}, 
    \end{align*}

    \noindent
    and
    
    \begin{align*}
        A &= \stack(\{A[1], A[2], A[3]\}) = \begin{bmatrix}
            1 & 1 & 1 \\
            1 & 0 & 0 \\
            0 & 1 & 1 \\
            1 & 0 & 0 \\
            0 & 1 & 0 \\
            0 & 0 & 1
        \end{bmatrix}.
    \end{align*}
    \qed
\end{example}

This notation allows us to express the strategy matrix containing all linear queries in all geolevels for the case in which the per-geounit strategy matrix is the same in each geolevel; specifically, the overall strategy matrix in this case is given by

$$
S := A \otimes B \in \{0,1\}^{m \sum_l U[l] \times n b[1,1]},
$$

\noindent
where, as described above, the per-geounit strategy matrix $B$ has dimension $ m \times n.$ As a summary of this notation, an example of $A \otimes B$ is shown in Figure \ref{fig:query_mat_full}. 

The notation above is not quite general enough to capture all possible strategy matrices that a user may specify for a DAS execution in the most general case. This is because the DAS also allows users to specify distinct strategy matrices in each geolevel, but the notation above assumes that the per-geounit strategy matrix in each geolevel is $B.$ For this reason, we use an alternative definition for the strategy matrix that is general enough to encompass all possible strategies used by the DAS. To do so, let $B[l] \in \{B[l]\}_{l=1}^L$ denote the per-geounit strategy matrix for geounits in geolevel $l.$ Using this notation, the full strategy matrix is,

\begin{align} \label{B_generalized}
S := \left[\begin{array}{c}
A[1] \otimes B[1]\\
A[2]\otimes B[2]\\
\vdots\\
A[L-1]\otimes B[L-1]\\
A[L]\otimes B[L]
\end{array}\right].
\end{align}

\noindent
In contrast to the strategy matrix, $S,$ we do not require the workload to consist solely of linear queries for a hierarchical set of geographic entities; instead we only place very limited restrictions on this matrix. Specifically, we use $W$ to denote the workload and only require that $W\in \rr^{ f \times  n b[1,1] }.$

When either the conventional spine or an AIAN spine is used within the DAS, the spine optimization routines are not used and the strategy matrix used by the DAS is simply the unaltered initial strategy matrix. In these cases, we use $\geoprop_{l}\in \{\geoprop_{l} \}_{l=1}^L,$ where $\geoprop_{l}\in \rr_{++}$ and $\sum_l \geoprop_{l}=1,$ to denote the proportion of the global PLB that is used for the query groups of geolevel $l.$ For each geolevel $l,$ we use $\queryprop_{i,l} \in \{\queryprop_{i,l}\}_{i=1}^{q[l]},$ where $\queryprop_{i,l}\in \rr_{++}$ and $\sum_i \queryprop_{i,l} = 1,$ to denote the proportion of the geolevel PLB that is used for query group $i.$ In summary, the proportion of the global PLB assigned to query group $i$ for geounit $(l,u)$ is $\geoprop_{l} \queryprop_{i,l}$ in these cases. 

\begin{figure}
    \centering
    \includegraphics[width=45mm]{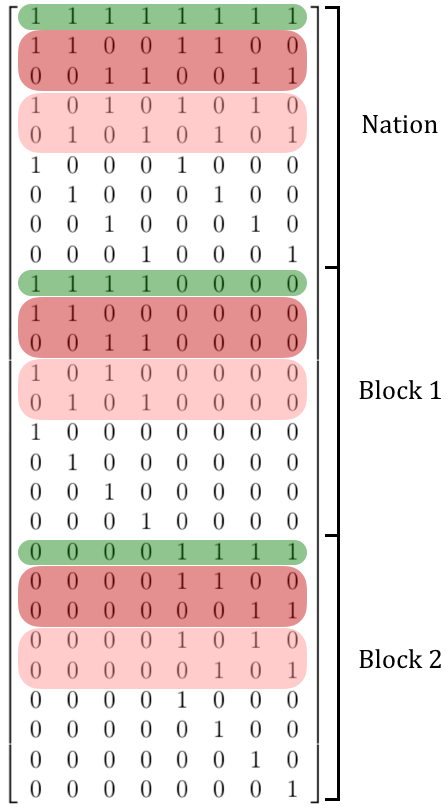}
    \caption{An example of $A \otimes B$ is given above. In this example, $A$ is given by $\stack(\mathbf{1}_{2}^{\top}, I_{2}),$ which corresponds to the case in which there is one US geounit with two block geolevel child geounits. Each geounit contains a $2\times 2$ histogram, and the linear queries in $B$ are given by a total sum query ($\mathbf{1}_2^\top\otimes \mathbf{1}^\top_2$), the marginal query groups for each attribute ($I_2 \otimes \mathbf{1}_2^\top$ and $\mathbf{1}_2^\top \otimes I_2$), and the detailed cell query group ($I_2 \otimes I_2$). For each of the geounits, the total sum query is highlighted in green, the marginal query groups are highlighted in light and dark red, and the detailed cell query group is unhighlighted.}
    \label{fig:query_mat_full}
\end{figure}

The following theorem uses the notation above to show that the DAS is either $\epsilon$--DP or $\rho$--zCDP, depending on whether a pure DP or zCDP framework and implementing mechanism is used, when either the conventional spine or an AIAN spine is used. The fact that geounits can be bypassed in the case of the optimized spine requires some modifications to this proof, so this result is provided in Section \ref{bypassing}.

\begin{theorem} \label{tda_is_dp}
Suppose the implementing mechanism $\mech(\boldsymbol{x})$ outputs $\{(A[l]\otimes B[l]) \boldsymbol{x} + \boldsymbol{y}_{l}\}_{l=1}^L,$ where each $B[l]$ is defined by vertically stacking the query group matrices $\{Q[i,l]\}_{i=1}^{q[l]}$ with $Q[i,l]$ of dimension $m[i,l]\times n,$ and $\boldsymbol{y}_{l}$ is a vector of independent random variables. Also, let $\boldsymbol{\queryprop[l]} := \stack(\{\mathbf{1}_{m[i,l]} \queryprop_{i,l} \}_i).$ Then we have the following.

\begin{enumerate}
    \item If either 
    \begin{center}
        $\boldsymbol{y}_{l} \sim \textup{Laplace}(\mathbf{0}, \diag(\mathbf{2} \oslash (\epsilon \geoprop_{l} \boldsymbol{\queryprop[l]})))$ or $\boldsymbol{y}_{l} \sim \textup{Laplace}_\zz(\mathbf{0}, \diag(\mathbf{2} \oslash (\epsilon \geoprop_{l} \boldsymbol{\queryprop[l]}))),$
    \end{center}
    \noindent
    then both $\mech(\cdot)$ and the DAS are $\epsilon$--DP with respect to the neighbor definition $\{x, x^\prime \in \dat^{\boldsymbol{d}} \cap \mathcal{G} \mid  d_{\mathcal{H}}(x, x^\prime) = 2 \}.$
    \item If either 
    \begin{center}
        $\boldsymbol{y}_{l} \sim \textup{N}(\mathbf{0}, \diag(\mathbf{1} \oslash (\rho \geoprop_{l} \boldsymbol{\queryprop[l]})))$ or $\boldsymbol{y}_{l} \sim \textup{N}_\zz(\mathbf{0}, \diag(\mathbf{1} \oslash (\rho \geoprop_{l} \boldsymbol{\queryprop[l]}))),$
    \end{center}
    \noindent
    then both $\mech(\cdot)$ and the DAS are $\rho$--zCDP with respect to the neighbor definition $\{x, x^\prime \in \dat^{\boldsymbol{d}} \cap \mathcal{G} \mid  d_{\mathcal{H}}(x, x^\prime) = 2 \}.$
\end{enumerate}
\end{theorem}
\begin{proof}
Note that for each fixed $l,$ we have

$$A[l]\otimes B[l] = A[l]\otimes \stack(\{Q[i,l]\}_i)= \stack(\{A[l]\otimes Q[i,l]\}_i).$$

\noindent
Since each $A[l]\otimes Q[i,l]$ satisfies the definition of a query group matrix, Lemma \ref{per_geostrat_is_dp} implies that releasing the output of $\mech_l,$ with output defined as $(A[l]\otimes B[l]) \boldsymbol{x} + \boldsymbol{y}_{l}$ is $(\geoprop_{l} \epsilon)$--DP in the first case and $(\geoprop_{l} \rho)$--zCDP in the second case. Thus, sequential composition implies that the DP implementing  mechanism $\mech(\cdot)$ is $\sum_l\geoprop_{l} \epsilon=\epsilon $--DP in the first case and $\sum_l\geoprop_{l} \rho=\rho $--zCDP in the second case. Since both $\rho$--zCDP and $\epsilon$--DP guarantees are invariant to post-processing, the DAS satisfies the same privacy guarantee as the implementing  mechanisms.
\end{proof}

In contrast to cases in which either the conventional spine or an AIAN spine are used, the spine optimization routines alter the initial matrix representation of the spine, $A,$ and the PLB proportions allocated to each geounit. We use $\geopropStrat_{l,u}$ to denote the proportion of the global PLB allocated to geounit $(l,u)$ after spine optimization, and $\boldsymbol{\geopropStrat}$ to denote a vector composed of these values, \textit{i.e.}, $ \boldsymbol{\geopropStrat}:= \stack(\{\geopropStrat_{l,u}\}_{l,u}).$ In other words, after spine optimization is complete, the proportion of the global PLB that is allocated to query group $i$ for geounit $(l,u)$ is given by $\geopropStrat_{l,u} \queryprop_{i,l}.$ Note that the spine optimization routines do not alter the per-geounit workloads, $ \{B[l]\}_{l},$ or the per-geounit query group proportions, $\{\queryprop_{i,l}\}_{i,l}.$ Table \ref{tab:notation} provides a summary of the notation introduced in this section that we use in the rest of the paper.

\begin{table}[h]
\centering
\begin{tabular}{ l l }
    $ U[l]\in\nn$ & The number of geounits in geolevel $l\in\{1,\dots,L\}$ \\
    $b[l,u] \in \{1,\dots, U[L]\}$ & The number of block descendants of geounit $(l,u)$ \\
    $\queryprop_{i,l} \in \rr_{++} $ & For each geounit in geolevel $l,$ the proportion of the geounit's PLB \\ 
    & allocated to query $i,$ \textit{i.e.}, $\sum_i\queryprop_{i,l}=1$ for every $l$ \\
    $ \geoprop_{l} \in \rr_{++} $ & The proportion of the global PLB that is allocated to each geounit in \\ 
    & geolevel $l$ before spine optimization  \\
    $\geopropStrat_{l,u}\in \rr_+ $ & The global PLB proportion of geounit $(l,u)$ after spine optimization \\
    $ B[l]\in \{0,1\}^{m \times n}$ & The per-geounit strategy matrix for geolevel $l,$ defined by stacking \\
    & the query group matrices for geolevel $l$ \\
    $A[l]\in \{0,1\}^{U[l]\times b[1,1]}$ & The matrix representation of geolevel $l$ of the spine, \textit{i.e.}, \\
    & $A[l]=\textup{block\_diag}(\{\boldsymbol{1}_{b[l,u]}^{\top}\}_{u})$ \\
    $A\in \{0,1\}^{\sum_l U[l] \times b[1,1]}$ & The matrix representation of the spine, \textit{i.e.}, $A=\stack(\{A[l]\}_{l=1}^L)$ \\ 
    $S\in \{0,1\}^{m \sum_l U[l] \times n b[1,1]}$ & The strategy matrix, as defined in (\ref{B_generalized})
\end{tabular}
\caption{Definitions used to construct the strategy matrices used by the DAS} \label{tab:notation}
\end{table}

\section{Bringing Off-Spine Entities Closer to the Spine} \label{osed}

This section describes two methods supported by the spine optimization routines to enhance accuracy in a set of target OSEs by bringing these geographic units closer to the spine in the first stage of the spine optimization routines. Both of these methods sacrifice accuracy in one or more geographic levels of the input spine to achieve this goal because they replace one or more geographic levels of the input geographic spine with alternative geographic levels that are closer to the OSEs. For example, the 2020 production redistricting and DHC spine settings resulted in the block group geographic level in an AIAN input spine being replaced with optimized block group geolevels, which degraded accuracy for census block groups. 

First, the approach used by the 2020 production redistricting and DHC DAS implementations to bring target OSEs closer to the geographic spine is conceptually the simplest approach. These DAS implementations define this first stage of the spine optimization settings by directly defining geographic levels as the intersections of geographic entities. For example, each optimized block group in the optimized spine used for the 2020 redistricting data file production execution is defined as an intersection of a combination of OSEs and geounits, \textit{e.g.}, a tract, an AIAN area (or the region outside of all AIAN areas), \textit{etc.}; see Section \ref{settings} for a full description of this approach.

The next subsection describes the second alternative approach that is supported by the DAS but that is not used in our 2020 production DAS execution settings. Specifically, rather than  directly defining the new geographic levels in the optimized spine as intersections of OSEs, this alternative method automates the geounit definitions in the new geographic levels to bring OSEs closer to the optimized spine.

\subsection{Automating the Choice of Alternative Geolevel Definitions}

This section describes one approach to redefine geolevels to bring OSEs closer to the spine. This is done by using a heuristic to reduce an objective function based on the \textit{off-spine entity distance} (OSED), which is the minimum number of geounits that must be added or subtracted from one another to derive an OSE.\footnote{Our use of the term ``off-spine entities" refers to geographic entities that may be off of the DAS geographic spine, but we do not assume that each OSE has a geographic extent that differs from each (on-spine) geounit. An implication of this definition of OSED is that, when the geographic extent of an OSE is identical to that of a geounit on the spine, its OSED is equal to one.} This heuristic generally results in a reduction in the variance of estimates for these geographic regions, particularly when a sufficiently high proportion of the global PLB is allocated to each geolevel. 

Before defining a systematic algorithm that outputs the OSEDs for each OSE, we define some additional notation. Let the set of OSEs be denoted by $\mathcal{K},$ and let $\{C_k(u)\}_{k\in\mathcal{K}},$ where $C_k:\nn \rightarrow \{0,1\},$ denote a set of functions such that $C_k(u)$ is equal to one when the OSE $k\in \mathcal{K}$ contains the block geounit $u$ and is equal to zero otherwise. We assume the input and output spines of this stage of the spine optimization routines are encoded using the matrix representation of the spine, as defined in the previous section and denoted by $A.$ The algorithm we describe for computing OSED next requires computing intermediate OSEDs of intersections of a given geographic entity and geounits on the spine; we denote the OSED of the intersection of the geographic entity $k\in\mathcal{K}$ and the geounit $(u,l)$ as $c[k,l,u].$

We begin by fixing some $k \in \mathcal{K}$ and finding the OSEDs for both $k$ and its complement, which we denote by $k',$ under the (temporary) assumption that the only geolevel is the block geolevel. In this case, each block would contribute $c[k,L,u] := C_k(u)$ to the OSED of OSE $k.$ Likewise, each block would contribute $c[k',L,u] := 1- C_k(u)$ to the OSED of the complement of $k.$

If the block-group geolevel were to be added to the spine at this point, we could compute the intersection of a single block group and $k$ in one of two ways. First, we could add all the blocks together that are both inside of entity $k$ and inside of the block group. This would result in the block-group $u$ contributing $\sum_{v\in \textup{Children}(u)} c[k,L,v]$ to the OSED of $k.$ On the other hand, we could also take all the geographic extent that the block group occupies and then subtract off the geographic region of the blocks in the complement of $k.$ This would result in this block group contributing $1 + \sum_{v\in \textup{Children}(u)} c[k',L,v]$ to the OSED of $k.$ Note that an additional one is added in this case because of the additional step of subtracting the complement of $k$ in the block group from the block group itself. Since OSED is defined as the \textit{minimum} number of geounits that must be added or subtracted to one another to define an entity, we choose the option that results in a smaller value. Since similar derivations can be carried out for the complement of $k,$ by symmetry, we have,

\begin{align} \label{recur_pre1}
    c[k,L-1,u] &:= \min \left\{\sum_{v\in \textup{Children}(u)} c[k,L,v], \; 1 + \sum_{v\in \textup{Children}(u)} c[k',L,v]  \right\} \\
    c[k',L-1,u] &:=  \min \left\{\sum_{v\in \textup{Children}(u)} c[k',L,v],\; 1 + \sum_{v\in \textup{Children}(u)} c[k,L,v]  \right\}. \label{recur_pre2}
\end{align}

Similar logic can be repeated to derive to derive the following recursive system of equations for the OSED of the intersection of entity $k$ and an arbitrary geounit $(l-1,u)$
\begin{align} \label{recur1}
    c[k,l-1,u] &:= \min \left\{\sum_{v\in \textup{Children}(u)} c[k,l, v], \; 1 + \sum_{v\in \textup{Children}(u)} c[k',l,v]  \right\} \\
    c[k',l-1,u] &:=  \min \left\{\sum_{v\in \textup{Children}(u)} c[k',l,v],\; 1 + \sum_{v\in \textup{Children}(u)} c[k,l,v]  \right\} \label{recur2}
\end{align}
\noindent Since all entities are assumed to be contained within the US, the final OSED for entity $k$ can be found by applying these recursions up to the root geounit and defining this OSED as $c[k,1,1].$ Afterward, our final objective function is defined by applying the reduce operation $h:\nn^{\textup{Card}(\mathcal{K})} \rightarrow \rr$ to the OSEDs, which, for example, can be defined as the arithmetic mean or the max function. This computation is summarized in Algorithm \ref{OSED}.

\LinesNotNumbered
\begin{algorithm}[H] \label{OSED}
\SetArgSty{textrm}
\DontPrintSemicolon
\caption{OSEDs\_Reduced$(A, \mathcal{K},\{C_k(\cdot)\}_{k}, h(\cdot))$} 
\For{$k \in \mathcal{K}$}{
\For{geounit $u$ in geolevel $L$}{   
$c[k,L,u] \leftarrow C_k(u)$ \\ 
$c[k',L,u] \leftarrow 1 - C_k(u)$ \\
}  
\For{ geolevel $l$ in $\{L-1,L-2,\dots,1\}$}{
\For{geounit $u$ in geolevel $l$}{
$x \leftarrow \sum_{v \in \textup{Children}(u)} c[k,l+1,v]$ \\ 
$y \leftarrow \sum_{v \in \textup{Children}(u)} c[k',l+1,v]$ \\ 
$c[k,l,u] \leftarrow \min \; \{ x, y + 1\} $ \\ 
$c[k',l,u] \leftarrow  \min \; \{ y, x + 1\}$ 
}
}
}
$\mathbf{return}$ $h(\{c[k,1,1]\}_{k\in\mathcal{K}})$
\end{algorithm}

In the most general setting, in which the OSEs are not necessarily disjoint, formulating an algorithm that redefines certain geolevels in order to minimize the OSEDs with a polynomial time complexity appears to be a difficult problem because of the similarity of this optimization problem to a set covering problem. For this reason, Algorithm \ref{opt_gs} describes an example of a greedy approach to approximate the redefinition of tract groups that minimizes the OSEDs of a set of off-spine entities in a computationally tractable manner. This example first redefines block groups as optimized block groups by combining blocks within a given tract geounit, and within the same intersection of the OSEs. As described in the pseudocode below, each optimized block group is composed of up to $\sqrt{n} + \textup{fanout\_cutoff}$ blocks, where $n$ is the number of blocks in the optimized block group's parent tract and fanout\_cutoff is a user choice parameter. The goal of this choice is to ensure the fanout value, \textit{i.e.}, the number of child geounits of a given geounit, of the tract and the fanout values of its child optimized block groups are reasonably low. This is beneficial because high fanout values can increase the runtime of DAS executions (by increasing the size of the optimization problems used within the DAS), and we have found that high fanout values also appear to degrade accuracy. To see why using this upper bound for the number of blocks within each block group results in reasonably low fanout values, note that we can minimize the  highest fanout value among a given tract and its children by defining the optimized block groups so that they each contain at most $ \left \lceil{\sqrt{n}}\right \rceil,$ where $\left \lceil{\cdot}\right \rceil$ denotes the ceiling function. For example, for a tract with 100 block geounit descendants, redefining its children by 10 optimized block group child geounits, each with 10 block child geounits, would result in the lowest possible maximum fanout of 10 among these 11 geounits.  After the optimized block group geolevel has been defined in Algorithm \ref{opt_gs}, the tract group geolevel is redefined to reduce the OSEDs of the entities in $\mathcal{K}.$ Note that this stage of the algorithm also ensures that the maximum number of tracts within each tract group is less than $\sqrt{n} + \text{fanout\_cutoff}.$ 

Algorithm \ref{opt_gs}  uses $\boldsymbol{x} \preceq_{\textup{lexicographic}} \boldsymbol{y},$ where $\boldsymbol{x},\boldsymbol{y} \in \rr^{n},$ to denote the lexicographic less than or equal to partial ordering, which is defined as $\bot$ when $\boldsymbol{x}\neq \boldsymbol{y}$ and the first index $i$ for which $\boldsymbol{x}[i] $ and $\boldsymbol{y}[i]$ differ satisfies $\boldsymbol{x}[i]>\boldsymbol{y}[i],$ and $\top$ otherwise.

Note that Algorithm \ref{opt_gs} does not alter the PLB proportions of the input spine. The next section describes the algorithm used to update these proportions in the second spine optimization stage.

\begin{algorithm}\label{opt_gs}
\SetArgSty{textrm}
\DontPrintSemicolon
\caption{Redefine\_Block\_Groups\_and\_Tract\_Groups$(A, \mathcal{K},\{C_k(\cdot)\}_{k})$} 
\tcp{Within each tract, redefine block groups by combining groups of $\sqrt{n} +\textup{fanout\_cutoff}$ blocks in the intersections of the same OSEs, where $n$ is the number of blocks in the tract.} 
\tcp{Initialize tract groups so that they each have one child Tract.} 
current\_OSEDs $\gets \textup{OSEDs\_Reduced}(A, \mathcal{K},\{C_k(\cdot)\}_{k}, \textup{Sort\_Descending}(\cdot))$ \\
\For{$i \in \{1,\dots\}$}{
$\textup{altered\_spine} \gets \bot$ \\
\For{$\textup{county} \in \textup{Counties}$}{  
$n\gets$ Number\_of\_tracts\_in\_county$($county$)$ \\
\For{$u,v \in \textup{Children}(\textup{county})$}{
\If{$\textup{Card}(\textup{Children}(u) \cup \textup{Children}(v)) >\sqrt{n} + \textup{fanout\_cutoff}$}{
\textbf{continue}\\
} 
\tcp{Combine\_Siblings$(u,v,A)$ returns a new spine defined as the spine $A$ after replacing the sibling geounits $u$ and $v$ with a single geounit with children given by $\textup{Children}(u)\cup\textup{Children}(v):$}
$A^\prime \gets $ Combine\_Siblings$(u,v,A)$ \\
test\_OSEDs $\gets \textup{OSEDs\_Reduced}(A^\prime, \mathcal{K},\{C_k(\cdot)\}_{k}, \textup{Sort\_Descending}(\cdot))$ \\
\If{test\_OSEDs $\preceq_{\textup{lexicographic}} $ current\_OSEDs}{ 
$A \gets A^\prime$ \\
current\_OSEDs $ \gets $ test\_OSEDs \\
$\textup{altered\_spine} \gets\top$\\
}
}
}
\If{$ \textbf{not } \textup{altered\_spine}$}{
\textbf{break}\\
}
}
$\mathbf{return}$ $A$ 
\end{algorithm}

\section{A Pareto Frontier of Geounit Definitions} \label{bypassing}

This section uses a matrix mechanism to derive decision rules for whether or not to bypass a geounit. We will consider the case in which a pure-DP implementing mechanism is used within the DAS first. This decision rule is motivated by a setting that is less general than that of the DAS in several ways. First, for every $l,l'\in\{1,\dots, L\},$ we suppose $B[l]=B[l'],$ and let $B:=B[l].$ Second, for each query group $i,$ we suppose $ \queryprop_{i,l} = \queryprop_{i,l'},$ and let $\queryprop_i := \queryprop_{i,l}.$ Third, we also constrain our attention to DP implementing mechanisms that use noise drawn from the continuous distributions $\textup{Laplace}(\cdot)$ rather than its discrete counterpart $\textup{Laplace}_{\zz}(0, b).$ The decision rule is based on the variance matrix of the WLS estimator of the detailed cell histogram counts for all block geounits in this simplified setting. It will be helpful to derive this variance prior to describing the decision rule. To do so, recall that the number of rows of query group $i$ is denoted by $m[i,1],$ and let $\boldsymbol{\queryprop} := \stack(\{\mathbf{1}_{m[i,1]} \queryprop_i \}_i).$  In the terminology introduced in Section \ref{matmech}, let the rescaling vector $r\in\rr^{m \sum_l U[l]}_{+}$ in this case be defined as $r:=\boldsymbol{\geopropStrat} \otimes \boldsymbol{\queryprop}\epsilon$ and let the rescaled strategy matrix be defined as

$$
\widetilde{S} := \diag(r) S = (\diag(\boldsymbol{\geopropStrat}) \otimes \diag(\boldsymbol{\queryprop}\epsilon) ) (A\otimes B) = (\diag(\boldsymbol{\geopropStrat}) A) \otimes (\diag(\boldsymbol{\queryprop}\epsilon) B).
$$

\noindent
Also, let the stacked DP implementing  mechanism answers be defined as $\tilde{\boldsymbol{z}}:= \widetilde{S} \boldsymbol{x} + \tilde{\boldsymbol{y}},$ where $\tilde{\boldsymbol{y}} = \diag(r)\boldsymbol{y},$ $\boldsymbol{y}\sim \textup{Laplace}(\mathbf{0}, \diag(\mathbf{2} \oslash (\epsilon \boldsymbol{\geopropStrat} \otimes \boldsymbol{\queryprop}))).$

Then the WLS estimate for this strategy matrix can be expressed as

\begin{align}
    \boldsymbol{\hat{x}} = (\widetilde{S}^\top \widetilde{S})^{-1} \widetilde{S} \tilde{\boldsymbol{z}}, 
\end{align}

\noindent
and the variance matrix of the output of the matrix mechanism, $\textup{Var}(W\boldsymbol{\hat{x}}),$ is proportional to

\begin{align} \label{var_mat}
    W (\widetilde{S}^\top \widetilde{S})^{-1} W^\top.
\end{align}

It is also worth explicitly stating how we define the operation of bypassing a parent geounit. We define the operation of bypassing a parent geounit with $\nc$ children, each with equal PLB proportions, by, 1) creating $\nc$ geounits in the geolevel of the parent, each with a geolevel proportion given by the sum of the proportions allocated to the parent and one of the children, 2) defining the single child of each of these $\nc$ new geounits by one of the children, 3) removing the old parent geounit, 4) redefining the geolevel proportion of each child to be zero. Thus, even though we call this operation ``bypassing a parent," this operation actually moves the geolevel PLB proportion to a higher geolevel. This ensures that this operation does not change the total number of geolevels, and since the DAS fixes the final estimates in a top-down manner, so that the consistency with parent constraints are satisfied, this also ensures that the entire share of the geolevel PLB allocations are used in cases in which a parent geounit has only one child geounit. Although this definition describes the operation used within the DAS, the decision rules developed in this section only depend on properties of the matrix $\widetilde{S}^\top \widetilde{S},$ and using this definition of the bypass operation impacts this matrix in the same way as simply reallocating the PLB of the parent geounit to the child geounits. For this reason, we motivate the decision rules described in this section on this simpler definition of the bypass operation. In other words, unlike what is done in the DAS codebase, here we define the operation of bypassing a parent geounit as redefining the PLB of each of the child geounits as the sum of the child's PLB and the PLB of the parent and then redefining the PLB of the parent geounit to be zero. 

The next result describes a case in which each of the expected squared errors of the matrix mechanism, \textit{i.e.}, the diagonal of $\textup{Var}(W \boldsymbol{\hat{x}}),$ can be decreased, or remain unchanged, by bypassing a parent. In cases in which the initial PLB proportion of a parent and its children are equal, this result implies that accuracy can be improved by bypassing the parent geounit when it has less than or equal to three child geounits. Note that the decision rule in the following theorem is only used in the spine optimization routines of the DAS when discrete or continuous $\epsilon$--DP Laplace mechanisms are used for the implementing  mechanisms. The decision rule used for cases in which either discrete or continuous $\rho$--zCDP Gaussian mechanisms are used for the implementing  mechanisms within the DAS, \textit{i.e.}, the decision rule used for the redistricting data file and DHC production executions, is described after the proof.

\begin{theorem} \label{pure_dp_bypassing}
Suppose that $\epsilon$--DP is implemented using Laplace mechanisms and that the per-geolevel query strategies and query PLB proportions are the same in each geolevel. Also, suppose that the geolevel PLB allocated to each of the $\nc$ children of geounit $(l,1)$ are equal (\textit{i.e.}: $\geopropStrat_{l+1,u}=\geopropStrat_{l+1,v}$ for all $u,v\in \{1,\dots, \nc\}$). If $\geopropStrat_{l+1,1} \geq  (\nc-1) \geopropStrat_{l,1}/2,$ then, for any  $B$ and $\{\queryprop_i\}_{i},$ reallocating the PLB assigned to geounit $(l,1)$ to its children will either decrease or leave unchanged each of the diagonal elements of $\textup{Var}(W \boldsymbol{\hat{x}}).$ 
\end{theorem}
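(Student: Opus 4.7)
The plan is to exploit the Kronecker structure of $\widetilde{S}^\top\widetilde{S}$ and reduce the theorem to a Loewner-order inequality governing the geography alone. Under the hypothesis that the per-geolevel query strategies and query-PLB proportions coincide across geolevels, $V = B\otimes W$ and $\widetilde{S}=(\diag(\boldsymbol{\geopropStrat})A)\otimes(\diag(\epsilon\boldsymbol{\queryprop})W)$; the mixed-product identity then gives
\[
\widetilde{S}^\top\widetilde{S}=G\otimes H,\quad G:=A^\top\diag(\boldsymbol{\geopropStrat})^2 A,\quad H:=W^\top\diag(\epsilon\boldsymbol{\queryprop})^2 W,
\]
so $\textup{Var}(V\boldsymbol{\hat{x}})=(BG^{-1}B^\top)\otimes(WH^{-1}W^\top)$. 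The diagonal of a Kronecker product is the entrywise product of the two factor diagonals, and $\diag(WH^{-1}W^\top)$ is strictly positive (because $W$ has full column rank) and is unaffected by the bypass. It thus suffices to show that every diagonal entry of $BG^{-1}B^\top$ decreases or stays unchanged under the bypass, and I aim for the stronger matrix inequality $G'\succeq G$, which by inverse-monotonicity on PSD matrices yields $BG'^{-1}B^\top\preceq BG^{-1}B^\top$.

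For the explicit form of $G'-G$, set $\eta:=\geopropStrat[l,1]$, $\gamma:=\geopropStrat[l+1,u]$ (equal across $u$), and let $a_u$ denote the transpose of the row of $A$ corresponding to child $(l+1,u)$. Because the children's block descendants partition those of the parent, $a_p:=A[l,1]^\top=\sum_{u=1}^{\nc}a_u$. As explained in the paragraph preceding the theorem, ``reallocating the PLB'' amounts to setting $\geopropStrat[l,1]\mapsto 0$ and $\geopropStrat[l+1,u]\mapsto\eta+\gamma$ for each $u$, which produces the same $\widetilde{S}^\top\widetilde{S}$ as the full TDA bypass. Under this update the only change to $G$ is within the parent--children cluster: the contribution $\eta^2 a_p a_p^\top+\gamma^2\sum_u a_u a_u^\top$ is replaced by $(\eta+\gamma)^2\sum_u a_u a_u^\top$, so
\[
G'-G=(2\eta\gamma+\eta^2)\sum_{u=1}^{\nc}a_u a_u^\top-\eta^2 a_p a_p^\top.
\]

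The remaining step is a Cauchy--Schwarz bound. For an arbitrary vector $x$, set $y_u:=a_u^\top x$; then $a_p^\top x=\sum_u y_u$, and
\[
x^\top(G'-G)x=(2\eta\gamma+\eta^2)\sum_u y_u^2-\eta^2\Big(\sum_u y_u\Big)^2\ge\big(2\eta\gamma+\eta^2-\nc\eta^2\big)\sum_u y_u^2=\eta\big(2\gamma-(\nc-1)\eta\big)\sum_u y_u^2,
\]
using $\big(\sum_u y_u\big)^2\le\nc\sum_u y_u^2$. The right-hand side is nonnegative exactly when $\gamma\ge(\nc-1)\eta/2$, which establishes $G'\succeq G$ under the hypothesis and completes the proof. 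The main delicate point is not the Cauchy--Schwarz estimate itself but the bookkeeping used to identify $G'-G$: verifying that the two bypass formulations produce the same Gram matrix and that all contributions of geounits outside the parent--children cluster cancel from $G'-G$ because their rows and PLBs are untouched.
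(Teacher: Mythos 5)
Your proof is correct and follows essentially the same route as the paper's: exploit the Kronecker factorization to reduce the claim to the Loewner comparison $A^\top\diag(\boldsymbol{\geopropStrat}_1^2)A \succeq A^\top\diag(\boldsymbol{\geopropStrat}_0^2)A$, identify the single affected parent--children block, and show it is positive semidefinite exactly when $\geopropStrat[l+1,1]\geq(\nc-1)\geopropStrat[l,1]/2$. The only (cosmetic) difference is that you certify positive semidefiniteness of the difference via the Cauchy--Schwarz bound $(\sum_u y_u)^2\le \nc\sum_u y_u^2$, whereas the paper projects the block onto its $\nc$-dimensional column space and reads off the smallest eigenvalue; both hinge on the same extremal direction $\mathbf{1}_\nc$.
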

\begin{proof}
Let $\{\geopropStrat_0[l,u]\}_{l,u}$ and $\{\geopropStrat_1[l,u]\}_{l,u}$ denote the geounit PLB proportions before and after reallocation, respectively. Likewise, let $\boldsymbol{\hat{x}}_0$ and $\boldsymbol{\hat{x}}_1$ denote the WLS estimate before and after reallocation, respectively. Also, for the symmetric matrices $C,D\in \rr^{n\times n}$ we will use $C \leq D$ to denote the condition that $C-D$ is negative semidefinite and $C \leq 0$ to denote the condition that $C$ is negative semidefinite. The variance matrix of the WLS estimate  before (respectively, after) bypassing geounit $(l,1)$ is proportional to

\begin{align*}
\textup{Var}(\boldsymbol{\hat{x}}_i) = &
((\textup{diag}(\boldsymbol{\geopropStrat}_i) A \otimes \diag(\boldsymbol{\queryprop}) B)^\top (\textup{diag}(\boldsymbol{\geopropStrat}_i) A \otimes \diag(\boldsymbol{\queryprop}) B))^{-1} \\ = &  ((A^\top \textup{diag}(\boldsymbol{\geopropStrat}_i) \otimes B^\top \diag(\boldsymbol{\queryprop}) ) (\textup{diag}(\boldsymbol{\geopropStrat}_i) A \otimes \diag(\boldsymbol{\queryprop}) B))^{-1} \\
= & ((A^\top \textup{diag}(\boldsymbol{\geopropStrat}_i^2) A) \otimes (B^\top \diag(\boldsymbol{\queryprop}^2) B))^{-1} \\
= & (A^\top \textup{diag}(\boldsymbol{\geopropStrat}_i^2) A)^{-1} \otimes (B^\top \diag(\boldsymbol{\queryprop}^2)B)^{-1},
\end{align*}

\noindent where $i=0$ (respectively, $i=1$). We prove the sufficient condition that $\textup{Var}(\boldsymbol{\hat{x}}_1) \leq \textup{Var}(\boldsymbol{\hat{x}}_0).$ Given the variance matrix of the WLS estimator above, this condition is equivalent to

\begin{align*}
\textup{Var}(\boldsymbol{\hat{x}}_1) \leq \textup{Var}(\boldsymbol{\hat{x}}_0) \iff \\
(A^\top \textup{diag}(\boldsymbol{\geopropStrat}_1^2) A)^{-1} \otimes (B^\top \diag(\boldsymbol{\queryprop}^2) B) ^{-1} \leq (A^\top \textup{diag}(\boldsymbol{\geopropStrat}_0^2) A)^{-1} \otimes  (B^\top \diag(\boldsymbol{\queryprop}^2) B) ^{-1}  \iff \\
((A^\top \textup{diag}(\boldsymbol{\geopropStrat}_1^2) A)^{-1} - (A^\top \textup{diag}(\boldsymbol{\geopropStrat}_0^2) A)^{-1}) \otimes (B^\top \diag(\boldsymbol{\queryprop}^2) B) ^{-1} \leq 0.
\end{align*}

\noindent Since we assume that $B^\top \diag(\boldsymbol{\queryprop}^2) B$ is positive definite, this condition holds if and only if,

\begin{align*}
((A^\top \textup{diag}(\boldsymbol{\geopropStrat}_1^2) A)^{-1} - (A^\top \textup{diag}(\boldsymbol{\geopropStrat}_0^2) A)^{-1}) \leq 0 \iff \\
(A^\top \textup{diag}(\boldsymbol{\geopropStrat}_1^2) A)^{-1} \leq (A^\top \textup{diag}(\boldsymbol{\geopropStrat}_0^2) A)^{-1}
\iff \\
A^\top \textup{diag}(\boldsymbol{\geopropStrat}_1^2) A \geq A^\top \textup{diag}(\boldsymbol{\geopropStrat}_0^2) A \iff \\
A^\top \textup{diag}(\boldsymbol{\geopropStrat}_1^2 - \boldsymbol{\geopropStrat}_0^2) A \geq 0
\end{align*}

Note that the only elements of $\boldsymbol{\geopropStrat}_1$ that are not equal to $\boldsymbol{\geopropStrat}_0$ can be defined in terms of $\boldsymbol{\geopropStrat}_0$ by $\geopropStrat_1[l+1,u] = \geopropStrat_0[l+1,u] + \geopropStrat_0[l,1]$ for all $u \in \{1,\dots \nc\}$ and $\geopropStrat_1[l,1]=0.$ Thus, $ A^\top \textup{diag}(\boldsymbol{\geopropStrat}_0^2-\boldsymbol{\geopropStrat}_1^2) A $ is a block matrix with a single block that is nonzero. Let $n[u]:=b[l+1,u]$ for $u\in \{1\dots,\nc\}.$ This block is of dimension $b[l,1]\times b[l,1],$ and is given by 

$$D := \geopropStrat_0[l,1] \left[\begin{array}{cccc}
2\geopropStrat_0[l+1,1]\mathbf{1}_{n[1]} \mathbf{1}_{n[1]}^\top & -\geopropStrat_0[l,1]\mathbf{1}_{n[1]} \mathbf{1}_{n[2]}^\top & -\geopropStrat_0[l,1]\mathbf{1}_{n[1]} \mathbf{1}_{n[3]}^\top & \cdots \\
-\geopropStrat_0[l,1]\mathbf{1}_{n[2]} \mathbf{1}_{n[1]}^\top & 2\geopropStrat_0[l+1,2]\mathbf{1}_{n[2]} \mathbf{1}_{n[2]}^\top & -\geopropStrat_0[l,1]\mathbf{1}_{n[2]} \mathbf{1}_{n[3]}^\top & \\
-\geopropStrat_0[l,1]\mathbf{1}_{n[3]} \mathbf{1}_{n[1]}^\top & -\geopropStrat_0[l,1]\mathbf{1}_{n[3]} \mathbf{1}_{n[2]}^\top & 2\geopropStrat_0[l+1,3]\mathbf{1}_{n[3]} \mathbf{1}_{n[3]}^\top\\
\vdots &  & & \ddots \\
\end{array}
\right].$$ 

\noindent For each $i\in \{1,\dots,\nc\},$ let $a[i]:=\sum_{j=1}^{i-1} n[j],$ $b[i]:=\sum_{j=i+1}^{\nc} n[j],$ and $\boldsymbol{t_i}:=(\mathbf{0}_{a[i]}^\top,\mathbf{1}_{n[i]}^\top/n[i], $ $\mathbf{0}_{b[i]}^\top)^\top.$ Note that the matrix $D$ has only $\nc$ unique columns, and the vectors $\{\boldsymbol{t_i}\}_{i=1}^\nc$ provide an orthogonal basis for the span of these columns. Let $T:=\textup{horizontal\_stack}(\{\boldsymbol{t_i}\}_{i}).$ Thus, 

\begin{center}
    $D \geq 0 \iff TDT^\top \geq 0 \iff$
\end{center}

\begin{center}
$ \left[\begin{array}{cccc}
2\geopropStrat_0[l+1,1]\geopropStrat_0[l,1] & -\geopropStrat_0[l,1]^2 & -\geopropStrat_0[l,1]^2 & \cdots \\
-\geopropStrat_0[l,1]^2 & 2p_0(l+1,2)\geopropStrat_0[l,1]  &-\geopropStrat_0[l,1]^2   & \\
-\geopropStrat_0[l,1]^2 &  -\geopropStrat_0[l,1]^2 & 2\geopropStrat_0[l+1,3]\geopropStrat_0[l,1]  \\
\vdots &  & & \ddots \\
\end{array}
\right] \geq  0 \iff $
\par\end{center}

\begin{center}
    $D' := \diag(\mathbf{1}_r (2\geopropStrat_0[l+1,1]\geopropStrat_0[l,1]+\geopropStrat_0[l,1]^2)) - \mathbf{1}_r \mathbf{1}_r^\top \geopropStrat_0[l,1]^2  \geq 0.$
\end{center}

\noindent Since $\geopropStrat_0[l,u]\geq 0$ for all $l$ and $u,$ the eigenvector of $D'$ corresponding to the smallest eigenvalue is $\mathbf{1}_r,$ so this matrix is positive semidefinite when

\begin{center}
$2\geopropStrat_0[l+1,1]\geopropStrat_0[l,1] \geq  (\nc-1) \geopropStrat_0[l,1]^2.$
\end{center}

\noindent Since $(l,1)$ was not bypassed in our initial PLB allocation, $\geopropStrat_0[l,1] \neq 0,$ so we have

\begin{center}
$\geopropStrat_0[l+1,1] \geq  (\nc-1) \geopropStrat_0[l,1]/2.$
\end{center}

\end{proof}

\subsubsection*{Remark 1}
Consider two cases in which the only query in $B$ is a total population query. First, if the parent is not bypassed, we could construct unbiased estimates of the total population of the parent by either observing the DP answer for the parent directly, which has a variance of $2/\geopropStrat_{l,1}^2,$ or by summing the DP answers of the children together, which has a variance of $2 \nc/\geopropStrat_{l+1,1}^2.$ The mean with inverse-variance weighting provides the linear combination of these estimates with the lowest possible variance of $2 \nc/(\nc \geopropStrat_{l,1}^2 + \geopropStrat_{l+1,1}^2)$ in this case. Second, if we did bypass the parent, we could estimate the total population query for the parent by summing the total population of the children, which would have a variance of $2 \nc/(\geopropStrat_{l+1,1} + \geopropStrat_{l,1})^2$ in this case. The theorem above can be viewed as a statement that limiting our attention to a single total population query, and to only the parent and its children, in this way is without loss of generality, at least for the purpose of finding the cases in which bypassing the parent does not increase the expected error of the WLS estimator for any query in any geounit. This is because $2 \nc/(\nc \geopropStrat_{l,1}^2 + \geopropStrat_{l+1,1}^2) \geq 2 \nc/(\geopropStrat_{l+1,1} + \geopropStrat_{l,1})^2$ if and only if $\geopropStrat_{l+1,1} \geq  (\nc-1) \geopropStrat_{l,1}/2,$ which is the same requirement given in the statement of the theorem. \qed

\subsubsection*{Remark 2}
This theorem may be of independent interest in the DP literature because other authors have already considered strategy matrices that have a hierarchical structure that is analogous to the matrix representation of the spine $A$ above, and this result can be used to narrow the search space of the set of hierarchical strategy matrices considered when choosing a strategy matrix with this property \cite{hay2010boosting,li2010optimizing}. For example, when using the WLS approach described by \cite{hay2010boosting} with PLB proportions that are the same for each level of the hierarchy and with the children of each non-leaf node defined by either two or three sub-intervals of the range of the parent node, this result implies that the expected squared error of an arbitrary linear query can be reduced by increasing the number of sub-intervals used to define these child nodes. \qed

One can use a similar technique to show that, in the case in which $\rho$--zCDP implementing Gaussian mechanisms are used to define $\boldsymbol{y},$ bypassing a parent will increase at least one diagonal element of $\textup{Var}(W \boldsymbol{\hat{x}})$ whenever the parent has two or more children, and the diagonal elements of $\textup{Var}(W \boldsymbol{\hat{x}})$ will remain unchanged whenever the parent has only one child. In part for this reason, our decision rule for the case in which $\rho$--zCDP discrete or continuous Gaussian mechanisms are used within the DAS consists of only bypassing parent geounits with only one child. As described in the introduction, this decision rule can also be motivated more directly than what was done in Theorem \ref{pure_dp_bypassing}. Specifically, given the top-down manner in which the DAS fixes estimates to ensure consistency with the estimates of the parent geounits, the estimates of a child geounit are fixed by those of the parent geounit whenever the parent only has one child geounit. Thus, if we did not bypass parent geounits with only one child, the PLB allocation of these child geounits would simply not be used, in the sense that the noisy measurements of these child geounits would not impact the histogram estimates of the child geounits, as is also described in the following observation. Note that the decision rule provided in Theorem \ref{pure_dp_bypassing} results in at least the same number of geounits being bypassed, since, whenever $\nc=1,$ the decision rule provided in Theorem \ref{pure_dp_bypassing} results in bypassing being chosen for any PLB allocations of the parent and child geounits.

\begin{observation} 
The noisy answers for each child geounit that does not have a sibling geounit do not impact the output of the DAS. Thus, the variance of the noisy answers that are used to construct the histogram estimates of these child geounits can be decreased by bypassing all parent geounits with only one child.
\end{observation}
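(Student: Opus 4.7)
The plan is to split the observation into two claims and verify each in turn. The first claim is that the DP primitive mechanism answers associated with any child geounit lacking a sibling play no role in the final TDA output. The second claim is that bypassing every such parent strictly decreases the variance of the measurements that do contribute to the final estimates.

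For the first claim, I would invoke the TDA's top-down consistency-with-parent constraints. At each level, the final child histograms are forced to sum (over siblings sharing a parent) to the parent's final histogram. When a parent has a single child, this constraint degenerates to the equality of the child's histogram with the parent's, leaving no degree of freedom for the child's noisy measurement to influence the output. Hence the independent noise drawn for that child geounit, whose per-coordinate variance is proportional to $1/(\geopropStrat[l+1,1]\,\queryprop[i]\,\rho)$ in the zCDP case (or the Laplace analogue $2/(\geopropStrat[l+1,1]\,\queryprop[i]\,\epsilon)$), is effectively discarded in the downward pass that enforces consistency.

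For the second claim, I would use the formal definition of the bypass operation given at the start of Section \ref{bypassing}. Applied to a parent $(l,1)$ with a single child $(l+1,1)$, the operation produces a new geounit at geolevel $l$ whose extent equals that of the child and whose geolevel proportion equals $\geopropStrat[l,1] + \geopropStrat[l+1,1]$, with the (now-only-formal) child's proportion reset to zero. The per-coordinate variance of the primitive measurement that is actually used at geolevel $l$ after the bypass is proportional to $1/((\geopropStrat[l,1] + \geopropStrat[l+1,1])\,\queryprop[i]\,\rho)$, which is strictly smaller than the pre-bypass value $1/(\geopropStrat[l,1]\,\queryprop[i]\,\rho)$ because $\geopropStrat[l+1,1] > 0$. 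The same reduction holds uniformly across every query in $W$ and every query group proportion $\queryprop[i]$, and the Laplace case is identical after substituting $\epsilon$ and a factor of two.

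The main obstacle is expository rather than technical: clearly separating the TDA's post-processing pass from the OLS matrix-mechanism idealization used throughout Section \ref{bypassing}, so that the reader sees why it is legitimate to call the child's noisy answer ``not used'' even though the primitive mechanism formally emits it. Once this redundancy is made explicit by appealing to the consistency constraint, the variance comparison above is immediate and completes the proof of both statements in the observation.
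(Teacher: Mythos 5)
Your proposal is correct and follows essentially the same route as the paper, which justifies the observation informally via the preceding discussion: the top-down consistency constraint forces a lone child's histogram to equal its parent's, so the child's noisy answers are discarded, and the bypass operation replaces the parent's proportion $\geopropStrat[l,1]$ with $\geopropStrat[l,1]+\geopropStrat[l+1,1]$, strictly lowering the variance of every measurement that is actually used. Your explicit per-coordinate variance comparison and the remark distinguishing the TDA's post-processing pass from the OLS idealization are faithful elaborations of the paper's argument rather than a different approach.
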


Algorithm \ref{detUpdate} summarizes how both decision rules are used within the spine optimization routines of the DAS. Note that this algorithm starts at the block group geolevel and iterates up the spine to the US geolevel, rather than starting at the US geolevel and moving downward, which ensures that no remaining geounits can be bypassed after only one pass through the spine.

\LinesNotNumbered
\begin{algorithm}[H] \label{detUpdate}
\SetArgSty{textrm}
\DontPrintSemicolon
\caption{Move\_Spine\_to\_Pareto\_Frontier$(A, \{\geoprop_{l}\}_{l}, \textup{Pure\_DP}\in\{\top, \bot\})$}
\For{$l \in \{1,\dots,L\}$}{
\For{$u$ in geolevel $l$}{
$\geopropStrat_{l,u} \gets \geoprop_{l}$
}
}
\For{$l \in \{L-1,L-2,\dots, 1\}$}{
\For{$u$ in geolevel $l$}{
Min\_Child\_PLB $ \gets \min_{c\in \textup{Children}(u)} \; \geopropStrat_{l+1,c}$ \\
$\nc \gets \textup{Card}(\textup{Children}(u))$ \\
\If{\textup{Pure\_DP} $\textup{and } \textup{Min\_Child\_PLB}  \geq (\nc-1)\geopropStrat_{l,u}/2$}{
$A, \{\geopropStrat_{l,u}\}_{l,u} \gets \textup{Bypass\_Parent\_Geounit}(u, A, \{\geopropStrat_{l,u}\}_{l,u})$ 
}
\uElseIf{$\nc = 1$}{
$A, \{\geopropStrat_{l,u}\}_{l,u}\gets \textup{Bypass\_Parent\_Geounit}(u, A, \{\geopropStrat_{l,u}\}_{l,u})$}
}
}
$\mathbf{return}$ $A, \{\geopropStrat_{l,u}\}_{l,u}$
\end{algorithm}

\subsection{The impact of Spine Optimization on the Privacy Guarantees of the DAS}

Using the DAS with the spine that is output from Algorithm \ref{opt_gs} does not alter the privacy guarantees of the DAS because the same arguments used in the proof of Theorem \ref{tda_is_dp} apply to this case as well. However, the same cannot be said for the spine that is output from Algorithm \ref{detUpdate} when at least one geounit is bypassed. The following theorem generalizes the argument used in Theorem \ref{tda_is_dp} to show that using the spine output from Algorithm \ref{detUpdate} within the DAS does not impact the privacy guarantees.

\begin{theorem} \label{tda_w_opt_spine_is_dp}
Suppose the implementing mechanism $\mech(\boldsymbol{x})$ outputs $\{(A[l][u,\cdot] \otimes B[l]) \boldsymbol{x} + \boldsymbol{y}_{l,u}\}_{l,u},$ where each $A[l]$ is the matrix representation of geolevel $l\in\{1,\dots, L\}$ of the spine output from the spine optimization routines described in Sections \ref{osed} and \ref{bypassing}, each $B[l]$ is defined by vertically stacking the query group matrices $\{Q[i,l]\}_{i=1}^{q[l]}$ with $Q[i,l]$ of dimension $m[i,l]\times n,$ $\boldsymbol{x}$ is the confidential histogram cell counts for dataset $x\in \dat^{\boldsymbol{d}} \cap \mathcal{G},$ and $\boldsymbol{y}_{l,u}$ is a vector of either independent random variables or has all elements equal to $\infty.$ Also, let $\boldsymbol{\queryprop[l]} := \stack(\{\boldsymbol{1}_{m[i,l]} \queryprop_{i,l} \}_i).$ Then we have the following.

\begin{enumerate}
    \item If $\boldsymbol{y}_{l,u}=\boldsymbol{\infty}$ when $\geopropStrat_{l,u}=0,$ and is distributed as either $\boldsymbol{y}_{l,u} \sim \textup{Laplace}(\mathbf{0}, \diag(\mathbf{2} \oslash (\epsilon \geopropStrat_{l,u} \boldsymbol{\queryprop[l]})))$ or $\boldsymbol{y}_{l,u} \sim \textup{Laplace}_\zz(\mathbf{0}, \diag(\mathbf{2} \oslash (\epsilon \geopropStrat_{l,u} \boldsymbol{\queryprop[l]})))$ when $\geopropStrat_{l,u}>0,$ then both $\mech(\cdot)$ and the DAS are $\epsilon$--DP with respect to the neighbor definition $\{x, x^\prime \in \dat^{\boldsymbol{d}} \cap \mathcal{G} \mid  d_{\mathcal{H}}(x, x^\prime) = 2 \}.$
    \item If $\boldsymbol{y}_{l,u}=\infty$ when $\geopropStrat_{l,u}=0,$ and is distributed as either $\boldsymbol{y}_{l,u} \sim \textup{N}(\mathbf{0}, \diag(\mathbf{1} \oslash (\rho \geopropStrat_{l,u} \boldsymbol{\queryprop[l]})))$ or $\boldsymbol{y}_{l,u} \sim \textup{N}_\zz(\mathbf{0}, \diag(\mathbf{1} \oslash (\rho \geopropStrat_{l,u} \boldsymbol{\queryprop[l]})))$ when $\geopropStrat_{l,u}>0,$ then both $\mech(\cdot)$ and the DAS are $\rho$--zCDP with respect to the neighbor definition $\{x, x^\prime \in \dat^{\boldsymbol{d}} \cap \mathcal{G} \mid  d_{\mathcal{H}}(x, x^\prime) = 2 \}.$
\end{enumerate}
\end{theorem}
\begin{proof}

For $x \in \dat^{\boldsymbol{d}} \cap \mathcal{G},$ let the set of datasets $x^\prime  \in \dat^{\boldsymbol{d}} \cap \mathcal{G}$ that differ from $x$ on a single entry be denoted by $\textup{N}(x).$ Also, let $T[l] := \{u \in \nn \mid \geopropStrat_{l,u} > 0\}.$ The output of all the DP implementing mechanisms is observationally equivalent to the set of finite output elements, or $\{(A[l][u,\cdot] \otimes B[l]) \boldsymbol{x} + \boldsymbol{y}_{l,u}\}_{l, u\in T[l]},$ because, independently of the data, the remaining DP mechanism answers are infinite with probability one. Thus, without loss of generality, we can restrict our attention to an alternative mechanism that outputs $\{(A[l][u,\cdot] \otimes B[l]) \boldsymbol{x} + \boldsymbol{y}_{l,u}\}_{l, u\in T[l]}.$ Let $\widehat{A}[l] := \stack(\{A[l][u, \cdot]\}_{u\in T[l]}),$ so the strategy matrix for this alternative mechanism can be written as, 

$$ D :=  \stack(\{\widehat{A}[l] \otimes B[l] \}_l). $$

To prove the first result of the theorem, let $\boldsymbol{\hat{\geopropStrat}[l]}:=\stack(\{\geopropStrat_{l,u}\}_{l, u\in T[l]}),$ and $\hat{\boldsymbol{\geopropStrat}}:=\stack(\{\boldsymbol{\hat{\geopropStrat}[l]}\}_l).$ In this case, the noise for this mechanism is distributed as either $\hat{\boldsymbol{y}} \sim \textup{Laplace}(\mathbf{0}, \diag(\boldsymbol{\hat{b}}))$ or $\hat{\boldsymbol{y}} \sim \textup{Laplace}_\zz(\mathbf{0}, \diag(\boldsymbol{\hat{b}})),$ where $\boldsymbol{\hat{b}}:=\stack(\{\boldsymbol{\hat{b}}[l]\}_l)$ and $\boldsymbol{\hat{b}}[l]:=\textbf{2}\oslash(\epsilon(\boldsymbol{\hat{\geopropStrat}[l]}\otimes\boldsymbol{\queryprop}[l])).$ Thus, we will show the following condition, which, by the first result in Lemma 3, will imply the first result of the theorem,

\begin{align} \label{goal_ineq_in_opt_spine_dp_thm}
    \max_{x \in \dat^{\boldsymbol{d}} \cap \mathcal{G}, \; x' \in \textup{N}(x)} \; \sum_i \left| D[i,\cdot] (\boldsymbol{x} - \boldsymbol{x}') /\boldsymbol{\hat{b}}[i] \right| \leq  \epsilon .
\end{align}

\noindent
Starting from the left hand side of this inequality, we have,

\begin{align*}
    \max_{x \in \dat^{\boldsymbol{d}} \cap \mathcal{G}, \; x' \in \textup{N}(x)} \; \sum_i \left| \left(\diag(\mathbf{1} \oslash \boldsymbol{\hat{b}}) D\right)[i,\cdot] (\boldsymbol{x} - \boldsymbol{x}') \right| \leq
    2 \max_j \; \sum_i  \left(\diag(\mathbf{1} \oslash \boldsymbol{\hat{b}}) D\right)[i,j] = &\\
    2 \max_j \; \sum_i  \stack\left(\{ \diag \left(\epsilon(\boldsymbol{\hat{\geopropStrat}[l]}\otimes\boldsymbol{\queryprop}[l])/2\right) (\widehat{A}[l] \otimes B[l]) \}_l \right)[i,j] = & \\
    \epsilon \max_j \; \sum_l \sum_i \left( \diag(\boldsymbol{\hat{\geopropStrat}[l]}) \widehat{A}[l] \otimes \diag(\boldsymbol{\queryprop}[l]) B[l] \right)[i,j]. & \numberthis \label{pt_of_convergence_for_pure_dp}
\end{align*}

\noindent
Note that each of the row sums of $\diag(\boldsymbol{\queryprop}[l]) B[l],$ \textit{i.e.}, the vector $B[l]^\top \diag(\boldsymbol{\queryprop}[l]) \mathbf{1} = B[l]^\top \boldsymbol{\queryprop}[l],$ is equal to $\mathbf{1}$ because $\sum_k Q[i,l][k,\cdot]=\mathbf{1}^\top$ and $\sum_k \queryprop_{k,l}=1.$ Thus, since each column $j$ of $\diag(\boldsymbol{\hat{\geopropStrat}[l]}) \widehat{A}[l]$ contains at most one nonzero value, we have,

\begin{align*}
    \epsilon \max_j \; \sum_l \sum_i \left( (\diag(\boldsymbol{\hat{\geopropStrat}[l]}) \widehat{A}[l]) \otimes (\diag(\boldsymbol{\queryprop}[l]) B[l]) \right)[i,j] = & \\
    \epsilon \max_j \; \sum_l \sum_i \left(  \diag(\boldsymbol{\hat{\geopropStrat}[l]}) \widehat{A}[l]\right)[i,j] =
    \epsilon \max_j \; \sum_i \left( \diag(\hat{\boldsymbol{\geopropStrat}}) \widehat{A}\right)[i,j],&
\end{align*}

\noindent
Note that $\sum_i \left( \diag(\hat{\boldsymbol{\geopropStrat}}) \widehat{A}\right)[i,j]$ is equal to the sum of the geolevel PLB proportions along a path from the US geounit to the block geounit $j.$ Since the bypass operation used within Algorithm \ref{detUpdate} does not change this sum, and its initial value is $\sum_l \geoprop_{l}=1,$ we have

$$    \epsilon \max_j \; \sum_i \left( \diag(\hat{\boldsymbol{\geopropStrat}}) \widehat{A}\right)[i,j] = \epsilon \sum_l \geoprop_{l} =  \epsilon, $$
\noindent
which implies inequality (\ref{goal_ineq_in_opt_spine_dp_thm}).

Similar logic also implies the second result of the theorem. Specifically, in this case we can use the second result of Lemma \ref{composition}, and the fact that for each $x \in \dat^{\boldsymbol{d}} \cap \mathcal{G}$ and $x' \in \textup{N}(x)$ we have $(D[i,\cdot] (\boldsymbol{x} - \boldsymbol{x}') )^2=\left| D[i,\cdot] (\boldsymbol{x} - \boldsymbol{x}') \right|,$ to derive the sufficient condition,

\begin{align} \label{second_goal_ineq_in_opt_spine_dp_thm}
    \max_{x \in \dat^d, \; x' \in \textup{N}(x)} \; \sum_i \left| D[i,\cdot] (\boldsymbol{x} - \boldsymbol{x}') \right| / \boldsymbol{\hat{\sigma}}_i^2 \leq 2 \rho,
\end{align}

\noindent
where $\boldsymbol{\hat{\sigma}^2}:=\stack(\{\boldsymbol{\hat{\sigma}^2[l]}\}_l)$ and $\boldsymbol{\hat{\sigma}^2[l]}:=\mathbf{1} \oslash (\rho (\boldsymbol{\hat{\geopropStrat}[l]} \otimes  \boldsymbol{\queryprop[l]})).$ This inequality follows from,

\begin{align*}
    \max_{x \in \dat^{\boldsymbol{d}} \cap \mathcal{G}, \; x' \in \textup{N}(x)} \; \sum_i \left| D[i,\cdot] (\boldsymbol{x} - \boldsymbol{x}') \right|/\boldsymbol{\hat{\sigma}^2}_i \leq
    2 \max_j \sum_i \left(\diag(\mathbf{1}\oslash \boldsymbol{\hat{\sigma}^2}) D\right)[i,j] = \\
    2 \rho \max_j \sum_i  \left(\diag( \boldsymbol{\hat{\geopropStrat}[l]} \otimes  \boldsymbol{\queryprop[l]}) D \right)[i,j].  \numberthis \label{pt_of_convergence_for_approx_dp}
\end{align*}
\noindent
Since this last value is equal to the value of (\ref{pt_of_convergence_for_pure_dp}) multiplied by $2\rho/\epsilon,$ and the logic above implies the value of (\ref{pt_of_convergence_for_pure_dp}) is equal to $\epsilon,$ we have,

$$ 2 \rho \max_j \sum_i  \left(\diag( \boldsymbol{\hat{\geopropStrat}[l]} \otimes  \boldsymbol{\queryprop[l]}) D \right)[i,j] = \epsilon \cdot (2 \rho/\epsilon) = 2 \rho,$$
\noindent
which implies the sufficient condition (\ref{second_goal_ineq_in_opt_spine_dp_thm}).
\end{proof}

\section{Spine Settings Used for 2020 Census Production Executions} \label{settings}

In this section we will describe the geographic spines used in the 2020 Census data product production executions, which include files for the persons and units universes for both the Redistricting Data (P.L. 94-171) Summary File and the Demographics and Housing Characteristics File (DHC) data products. 

For the redistricting data product persons production execution, the geolevels included in the spine were US, state, county, tract, optimized block group, and block.\footnote{While we focus our attention on the settings for US executions, similar settings were used for the Puerto Rico executions, with the exception that the PR root is at the same level as state in the US hierarchy. The PLB allocated to each geolevel in the Puerto Rico executions was also normalized to ensure the global PLB of each Puerto Rico execution was the same as the global PLB of each corresponding US execution.} The same approach described above was used to include both an AIAN and a non-AIAN branch to this spine at the state geolevel and below. Likewise, optimized block groups were defined using the approach described above using the following four categories of OSEs. First, each AIAN OSE is composed of an individual AIAN area, along with one additional OSE defined by the region outside of all AIAN areas. Second, each GQ OSE is defined as the union of all blocks that contain the same combination of major GQ types. For example, one GQ OSE is the union of all blocks that only contain GQs that are college/university student housing. This OSE category was included to decrease the impact that blocks with GQs have on their neighbors. Third, each minor civil division (MCD) OSE is composed of an MCD in the twelve strong-MCD states, \textit{i.e.}, Connecticut, Maine, Massachusetts, Michigan, Minnesota, New Hampshire, New Jersey, New York, Pennsylvania, Rhode Island, Vermont, and Wisconsin, along with one additional OSE defined by the region outside of all of these MCDs. Fourth, each place OSE is defined by an incorporated or census-designated place that are outside of the twelve strong MCD states, along with one additional OSE that is defined by the region outside of these places. As described above in more detail, optimized block groups were defined by grouping together blocks that are within the same AIAN, GQ, MCD, and place OSEs as well as the same census tract. After all geounits on the spine are defined in this way, the bypassing approach described in Algorithm \ref{detUpdate} was used to define the final PLB values for each geounit.

The spine used in the redistricting data product housing unit production execution was nearly identical to the spine used in the redistricting data product persons production execution with the exception that the GQ OSE category was not considered when defining optimized block groups because occupied GQs are excluded from the universe of the housing units.

The 2020 production DHCP and DHCH executions both use identical spines. The geolevels included on this spine are US, state, county, prim, tract subset group, tract subset, optimized block group, and block. The US, state, county, and block geolevels are defined in the same way as the redistricting data product executions, so next we will define the remaining geolevels. First, prim geounits are defined as the most granular geography unit used internally by the Population Estimates and Projections (PEP) Area of the Census Bureau. All tabulations published by PEP can be derived by aggregation using prims as the aggregation atom.\footnote{These geounits are most commonly called \textit{primitives} in PEP, rather than \textit{prims}. Since the spine used within the DAS also includes more granular(/primitive) geounits than prims, we use prims instead in this paper to avoid confusion.} For example, PEP produces estimates for four places in Autauga County, AL, \textit{i.e.}, Autaugaville town, Billingsley town, Millbrook city, and Prattville city, so the intersection of each of these places and Autauga County defines one prim geounit that is included in this geolevel. In addition, since estimates for each county in the US are also published by PEP, the final prim within Autauga County is defined as the area within Autauga County that is outside of these first four prims. Second, each tract subset geounit was defined as the intersection of a census tract, a prim geounit, and an AIAN OSE, as defined above. Third, tract subset groups were defined by grouping together tract subsets using the same approach described above for grouping blocks to define optimized block groups.

Fourth, optimized block groups were defined in a similar way as in the redistricting spine, but with a different choice of categories of OSEs, which we will define next. Specifically, each school district (SD) OSE is composed of an individual SD, along with one additional OSE defined by the region outside of all SDs. Also, each conventional block group (CBG) OSE is composed of a block group on the standard census geographic spine, along with one additional OSE defined by the region outside of all CBGs. Afterward, optimized block groups were defined by grouping together blocks within the same GQ OSE, tract subset, SD OSE, and CBG OSE. Like the case of the redistricting spines, after all geounits were defined, the bypassing approach described in Algorithm \ref{detUpdate} was used to define the final PLB values for each geounit.

\section{Impact of Spine Optimization: Summary Metrics} \label{sec:metrics}

To demonstrate the impact of spine optimization, this section provides summary metrics of the total population query for three redistricting person-level DAS executions. The 2010 CEF is used as the input data file for all three DAS executions. The execution labeled ``Optimized'' uses settings that are identical to those of the 2020 production redistricting person-level DAS execution. The execution labeled ``Conventional'' uses the production settings with the exception that the internal spine is defined by the official 2010 tabulation geography. The execution labeled ``AIAN'' uses the production settings with the exception that the internal spine is defined by the version used to produce the official 2020 redistricting data.

The tables report the mean absolute error (MAE) of the total population query for various geographic levels and for each of these three DAS executions. Each such MAE value is defined by first finding the total population error of each geounit in the geolevel; afterward, the MAE is defined as the arithmetic mean of the absolute value of these errors. While we focus on the accuracy of the total population query in this section, we have found that the relative comparisons of MAEs between spine settings typically do not strongly depend on the query being considered, since the PLB allocated to the total population query in a given geounit is proportional to the PLB allocated to any alternative query group. 

Table \ref{table:onSpineMAEs} provides total population MAE values for geolevels below the county in the conventional spine. Since we include a state geolevel total population invariant, the total population MAE for the state and US geolevel would be zero, so we only include the MAE for geolevels below the state geolevel. Regarding the column containing the count of the geounits in each geolevel, note that the universe of geounits that we consider only includes the geounits from the conventional spine containing at least one housing unit and/or at least one occupied GQ, which lowers the values of these counts in some cases. Table \ref{table:oseMAEs} provides total population MAE values for OSEs, including the OSEs that were targeted in the spine optimization routines, as described in Section \ref{settings} in more detail. The row ''Aggregated AIAN Areas in States" in this table provides the total population MAE values, averaged over the geographic entities defined as the union of all AIAN blocks within each state. Note that the number of geographic entities is given by 36 because there are 36 states that contain AIAN blocks. The next two rows, ''Aggregated AIAN Areas in Counties" and ''Aggregated AIAN Areas in Tracts" are defined similarly.

\begin{table}[h]
\caption{Total Population query MAE values of US redistricting persons DAS execution for geographic units on the standard census spine.  } \label{table:onSpineMAEs}
\begin{tabular}{lllll}
   &     & \multicolumn{3}{c}{Total Population MAE} \\ \cline{3-5} 
   &     & \multicolumn{3}{c}{Spine Type}  \\ \cline{3-5} 
Geolevel & Geounit Count & Conventional & AIAN    & Optimized    \\
\hline
County & 3,143   & 1.829       & 1.943     &   1.867   \\
Tract & 72,544   & 1.986       & 1.989     &     1.949  \\
Block Group & 216,886   & 1.407       & 1.409     &     15.99  \\
Block & 6,398,202   & 5.006       & 5.007     &     4.849  \\
\hline
\end{tabular}
\end{table}

\begin{table}[h]
\caption{Total Population query MAE values of a US redistricting persons DAS execution for off-spine entities.} \label{table:oseMAEs}
\begin{tabular}{lllll}
   &     & \multicolumn{3}{c}{Total Population MAE} \\ \cline{3-5} 
   &     & \multicolumn{3}{c}{Spine Type}  \\ \cline{3-5} 
Geolevel & Geounit Count & Conventional & AIAN    & Optimized    \\
\hline
AIAN Areas & 621   & 29.19       & 9.741    &   1.876 \\
All Aggregated AIAN Areas & 1 & 6428       & 15.00    &     3.000  \\
Aggregated AIAN Areas in States & 36  & 259.4       & 0.806     &     0.639  \\
Aggregated AIAN Areas in Counties & 412   & 29.09     & 1.629    &     1.391 \\
Aggregated AIAN Areas in Tracts & 1,532   & 11.08      & 1.793     &     1.656 \\
AIAN Block & 165,647   & 4.491      & 4.465     &     4.256 \\
Place\upstairs{\affilone} & 29,250  & 35.36     & 35.26    &     2.679 \\
MCD\upstairs{\affiltwo} & 11,914  & 14.74     & 14.78    &     8.254 \\
\hline
\multicolumn{5}{l}{\footnotesize{The universe of AIAN considered includes all AIAN areas other than state and tribal designated}} \\
\multicolumn{5}{l}{\footnotesize{statistical areas.}} \\
\multicolumn{5}{l}{\footnotesize{\upstairs{\affilone}Place refers to both census designated and incorporated places outside of the 12 strong-MCD states.}} \\
\multicolumn{5}{l}{\footnotesize{\upstairs{\affiltwo}MCD refers to all MCDs in each of the 12 strong-MCD states.}} \\
\end{tabular}
\end{table}

The accuracy metrics in these tables demonstrate several impacts of the DAS internal spine on accuracy that are worth highlighting. For example, Table \ref{table:onSpineMAEs} provides evidence that, relative to the execution that used the conventional spine for the DAS internal spine, the DAS execution that used the AIAN spine exhibited very slightly lower accuracy for the geounits on the conventional spine. In addition to random variability between DAS executions, this change is likely caused by the OSEDs of the geounits in the conventional spine increasing when moving from a DAS internal spine defined as the conventional spine to the AIAN spine. However, this impact is not large because most of the OSEDs of the geounits in the conventional spine are not impacted, and, since each geounit on the conventional spine can be defined by combining at most two geounits on the AIAN spine, the OSEDs that are impacted only increase from one to two. In contrast to these very small changes, Table \ref{table:oseMAEs} shows that moving from the conventional spine to the AIAN spine resulted in a large improvement in the accuracy within AIAN areas, since these areas are much closer to the AIAN spine than the conventional spine. 

Several MAE values in Table \ref{table:onSpineMAEs} for the optimized spine are also noteworthy. First, since using the optimized spine moves conventional (tabulation) block groups further from the spine, the block group geounits in the conventional spine become much less accurate. Second, in contrast, the remaining geolevels improve in accuracy relative to the AIAN spine DAS execution, and, in most cases, also relative to the conventional spine DAS execution. These improvements are a result of using the bypassing step of the spine optimization routines described in Section \ref{bypassing} and also of reducing the largest fanout values, which, as described above, is defined as the number of child geounits of a parent geounit, at the tract geolevel and below. Likewise, all of the MAEs for the optimized spine DAS execution in Table \ref{table:oseMAEs} are lower than the corresponding values of the AIAN spine and the conventional spine DAS executions, which is also a result of using the bypassing step, of decreasing the largest fanout values at the tract geolevel and below, and of bringing these target OSEs closer to the spine. Since block group geounits in the conventional spine are not legally or politically defined, do not correspond to functioning governmental units, and are typically used in cases in which a non-specific geolevel is required with a granularity between that of census tracts and census blocks, the Census Bureau's redistricting tuning experiments pointed to the accuracy improvements provided by spine optimization outweighing the cost of decreased accuracy in conventional (tabulation) block groups.



\bibliographystyle{plain}
\bibliography{references}

\end{document}